\numberwithin{equation}{section}
\colorlet{Changes@Color}{red}
\renewcommand{\vec}[1]{\mathbf{#1}}
\def\ie{\rm{{\it i.e.}\ }}
\def\eg{\rm{{\it e.g.}\ }}
\newtheorem{theorem}{Theorem}
\newtheorem{proposition}[theorem]{Proposition}
\newtheorem{corollary}{Corollary}
\theoremstyle{definition}
\theoremstyle{remark}
\journal{Unknown}
\begin{document}

\ifdraft{\listofchanges[style=list]}{}

\begin{frontmatter}

\title{\small{\color{red} Please cite as: Garbuno-Inigo, Alfredo, Francisco
Alejandro DiazDelaO, and Konstantin M. Zuev. Transitional annealed adaptive
slice sampling for Gaussian process hyper-parameter estimation.{\emph
International Journal for Uncertainty Quantification}, 6.4
(2016).}\\ \large{Transitional annealed adaptive slice sampling for Gaussian
process hyper-parameter estimation}}

\author[liv]{A. Garbuno-Inigo \corref{cor1}}
\ead{agarbuno@liv.ac.uk}

\author[liv]{F. A. DiazDelaO }

\author[caltech]{K. M. Zuev}

\cortext[cor1]{Corresponding author}

\address[liv]{Institute for Risk and Uncertainty, School of Engineering,
 University of Liverpool\\ Brownlow Hill, Liverpool, L69 3GH, United Kingdom}

\address[caltech]{Department of Computing and Mathematical Sciences, California Institute of Technology, Pasadena, CA 91125, USA.}

\begin{abstract}

Surrogate models have become ubiquitous in science and engineering for their
capability of emulating expensive computer codes, necessary to model and
investigate complex phenomena. Bayesian emulators based on Gaussian processes
adequately quantify the uncertainty that results from the cost of the original
simulator, and thus the inability to evaluate it on the whole input space.
However, it is common in the literature that only a partial Bayesian analysis is
carried out, whereby the underlying hyper-parameters are estimated via 
gradient-free optimisation or genetic algorithms, to name a few methods. On the
other hand, maximum a posteriori (MAP) estimation could discard important
regions of the hyper-parameter space. In this paper, we carry out a more
complete Bayesian inference, \replaced{that combines}{ through combining} Slice
Sampling with some recently developed Sequential Monte Carlo samplers. The
resulting algorithm improves the mixing in the sampling through 
delayed-rejection, the inclusion of an annealing scheme akin to Asymptotically
Independent Markov Sampling and parallelisation via Transitional Markov Chain
Monte Carlo. Examples related to the estimation of Gaussian process 
hyper-parameters \deleted{, as well as examples applied in other contexts of
Bayesian inference,} are presented. For the purpose of reproducibility, further
development, and use in other applications, the code to generate the examples in
this paper is freely available for download at
\url{http://github.com/agarbuno/ta2s2_codes}.

\end{abstract}

\begin{keyword} 

Gaussian process \sep hyper-parameter \sep marginalisation \sep optimisation \sep 
slice sampling \sep simulated annealing. 

\end{keyword}

\end{frontmatter}

\section{Introduction} 

The use of computationally expensive computer codes is widespread in science and
engineering in order to simulate and investigate complex phenomena. Such codes,
also referred to as {\it simulators}, often require intensive use of
computational resources that allows their use in contexts such as optimisation,
uncertainty analysis and sensitivity analysis \citep{Forrester2008a,
Kennedy2001}. For this reason, surrogate models are needed to efficiently
approximate the output of demanding simulators and enable efficient exploration
and exploitation of the input space. In this context, Gaussian processes are a
common choice to build statistical surrogates -also known as {\it emulators}-
which account for the uncertainty that stems from the inability to evaluate the
original model in the whole input space. Gaussian processes are able to fit
complex input/output mappings through a non-parametric hierarchical structure.
Common applications of Gaussian processes are found\deleted{, amongst many other
areas,} in Machine Learning \citep{Rasmussen2006}, Spatial Statistics
\citep{Cressie1993a} (with the name of Kriging), likelihood-free Bayesian
Inference \citep{Wilkinson2014}, Genetics \citep{Kalaitzis2011} and Engineering
\citep{DiazDelaO2011}\added{, amongst many other areas.}

Building an emulator requires the simulator to be run a repeated number of
times, but due to its computational complexity only a limited amount of
evaluations is available. This cost often translates in an inadequate
explanation of the uncertainty of the model parameters by uni-modal probability
distributions. \added{This phenomenon is often encountered in the form of low
signal-to-noise ratio which translates into multi-modal distributions of the
parameters \citep{Warnes1987}. Although it is not pathological in terms of the
Gaussian process predictive performance, this might lead to solutions with no
significant meaning for the parameters of the surrogate. The Bayesian treatment
of the inference problem often alleviates this concern by means of the prior,
however multi-modality cannot be fully discarded \citep{Andrianakis2012}}. In
\replaced{this setting, where the information available to train a Gaussian
process is limited,}{ these cases,} one is able to acknowledge all uncertainties
related to the modelling assumptions by resorting to Model Uncertainty Analysis
\citep{Draper1995}. More specifically, {\it hierarchical modelling} should be
considered in the model formulation. By doing so, the analyst is capable of
accounting for structural uncertainty, \replaced{which can be considered
either}{ and which be considered} as a continuous or discrete construct
\citep[see][\S 4]{Draper1995}. In Gaussian processes models, continuous
structural uncertainty can be incorporated through the Bayesian paradigm. In
this work, the approach for dealing with different models which are originated
from multi-modal distributions of the model\deleted{'s} parameters is to employ
samplers specifically built for such scenarios.

The implementation of a Gaussian process emulator requires a training phase.
This involves the estimation of the parameters of the Gaussian process, referred
to as {\it hyper-parameters}. The selection of the hyper-parameters is usually
done by Maximum Likelihood estimates (MLE) \citep{Forrester2008a}, Maximum a
Posteriori estimates (MAP) \citep{Oakley1999, Rasmussen2006}, or by sampling
from the posterior distribution \citep{Williams1996} in a fully Bayesian manner.
It is frequently the case that estimating the hyper-parameters depends on
maximising a complex, multi-modal function. In this scenario, traditional
optimisation routines \citep{Nocedal2004} are not able to guarantee global
optima when looking for the MLE or MAP, and a Bayesian treatment becomes a
suitable option to account for all the uncertainties in the modelling. In the
literature however, it is common that either the MLE or MAP alternatives are
preferred \citep{Kennedy2001, Gibbs1998} due to the numerical burden of
maximising the likelihood function or because it is assumed that Bayesian
integration (for example, through Markov chain Monte Carlo methods) is
prohibitive. Although these are strong arguments in favour of point estimates,
in high-dimensional applications it is difficult to assess if the number of runs
of the simulator is sufficient to produce robust hyper-parameters. This is
usually measured with prediction-oriented metrics such as the root-mean-square
error (RMSE) \citep{Kennedy2001a}. The use of this metric ignores the
uncertainty and risk assessment of choosing a single candidate for the 
hyper-parameters through an inference process with limited data. In order to
account for such uncertainty, numerical integration should be performed.
However, methods such as quadrature approximation become quickly infeasible as
the number of dimensions increases \citep{Kennedy2001}. Therefore, a suitable
approach is to perform Monte Carlo integration \citep{MacKay1998}. This allows
to approximate any integral by means of a weighted sum, given a sample from the
{\it correct} distribution.

The Gaussian process model specification yields no direct simulation of its
hyper-parameters by means of standard distributions (Gaussian, uniform or
exponential, to name a few). In order to be able to approximate the related
integrals, Markov chain Monte Carlo (MCMC) provides the proper statistical tool
to generate a desired sample. Nonetheless, the canonical sampling schemes, such
as Metropolis-Hastings or Gibbs sampling, might no be appropriate for 
multi-modal distributions \citep{Neal2001, Hankin2005}. This limitation is
originated by the tuning of the proposal distribution, the function used to
generate samples. If it is not correctly specified, the sampling space might not
be properly explored. The efficiency of the sampler should balance the ability
to move freely through the sampling space as well as to generate candidates
according to the regions where the probability mass is concentrated. One
of the most recognised concerns in simulation is to avoid {\it Random Walk}
behaviour, since it delays the stationarity state achieved by the chain and
limits its exploration capabilities \citep{Neal1993}. An optimal tuning of the
proposal distribution in high-dimensional spaces with intricate correlation
among sets of variables can turn into a demanding task yielding MCMC samplers
expensive \citep{Ching2007}. One alternative in Gaussian processes, amongst
other probabilistic applications, is to resort to the Hybrid Monte Carlo
(HMC) sampler, as it can avoid Random Walk behaviour at \added{the} expense of
additional computations needed for the gradient of the posterior
\citep{Neal1998a,Williams1996}. However, there is no guarantee that multi-modal
distributions can be sampled thoroughly by HMC \citep{Neal2011a} and in Gaussian
process models the gradients are not available in $\mathcal{O}(n)$ operations.

This paper proposes a sampling scheme for \replaced{multi-modal distributions}{
global optimisation} based on two principles. Firstly, the concept of
\textit{crumb} introduced by \cite{Neal2003a} for a multivariate adaptive slice
sampler. Secondly, the ideas by \cite{Zuev2013} on how to simulate local
approximations to the solution based on a sequence of nested subsets as in
Stochastic Subset Optimisation \citep{Taflanidis2008,Taflanidis2008a}. The use
of delayed rejection in the Asymptotically independent Markov sampler
\citep{Garbuno2015} has proven to enhance the mixing capabilities of the
algorithm in highly correlated probability models. To our knowledge, coupling
the adaptive slice sampling algorithm with a sequential sampler has not been
explored previously. This presents an opportunity to develop efficient sampling
algorithms for multi-modal distributions. The main advantage of the proposed
scheme is that it requires little tuning of parameters as it automatically
learns the sequence of temperatures for an annealing schedule, as opposed to
being tuned by trial and error \citep{Birge2012}. The approximation set
simulated in the previous level can be exploited further as it provides the
crumbs needed for the sampling in the next annealing level, leading the
simulation to appropriate regions of the sampling space. Additionally, embedding
the sampler with the Transitional Markov chain Monte Carlo method
\citep{Ching2007} results in an algorithm that can be run in a cluster of cores,
if available. By using the proposed Transitional Annealed Adaptive Slice
Sampling (TA$^2$S$^2$) algorithm to sample the hyper-parameters of a Gaussian
process, the resulting emulator is built taking into account both a
probabilistic and computationally efficient perspective. The probabilistic
strategy to treat the problem in a Bayesian manner accounts for the
uncertainty that stems from the unknown parameters. This adds a layer of
structural uncertainty to the model. Additionally, model uncertainty is
accounted for by adding numerical stabilisation measures in the Gaussian process
model as in \cite{Ranjan2011, Andrianakis2012} in a fully Bayesian framework.

The paper is organised as follows. In Section \ref{sec:gps}, a brief
introduction to the Bayesian treatment of Gaussian processes, as well as related
numerical stabilisation procedures, are presented. Section \ref{sec:ss} briefly
reviews the concepts of Slice Sampling and Adaptive Slice Sampling. Section
\ref{sec:taass} presents the proposed algorithm with the concepts discussed in
the previous sections, as well as the extensions needed for a parallel
implementation. In Section \ref{sec:exp}, some illustrative examples are used to
discuss the efficiency and robustness of the proposed algorithm. Concluding
remarks are presented in Section \ref{sec:conc}.

\section{The Gaussian process model} \label{sec:gps}

Let the real-valued function $\eta: \mathbb{R}^p \rightarrow \mathbb{R}$
represent the underlying input/output mapping of a \replaced{computer model}{
simulator}. Let $X = \{ \vec{x}_1, \ldots, \vec{x}_n \} $ be the set of {\it
design points}, that is, the set of selected points in the input space, where
$\vec{x}_i \in \mathbb{R}^p$ denotes a given input configuration. Let $ \vec{y}
= \{ y_1, \ldots, y_n \}$ be the corresponding set of outputs $y_i =
\eta(\vec{x}_i)$, such that each pair $(\vec{x}_i, y_i)$ denotes a {\it training
run}. The emulator is assumed to be an interpolator for the training runs, \ie
$y_i = \tilde{\eta}(\vec{x}_i)$ for all $i= 1, \ldots, n$, where the tilde
denotes approximation. This omits any random error in the output of the computer
code in the observed simulations, for which the simulator is said to be
deterministic. If a \added{fully parametrised} Gaussian process prior is assumed
for the outputs of the simulator, then the set of design points has a joint
Gaussian distribution\replaced{. The general assumption is that the simulator
satisfies the statistical model for the output with the following structure}{,
and the output satisfies the structure}

\begin{align}
\eta(\vec{x}) = h(\vec{x})^\top \boldsymbol\beta + Z(\vec{x}|\sigma^2, \boldsymbol\phi),
\end{align}

\noindent where $h(\cdot)$ is a vector of known basis (location) functions of
the input, $\boldsymbol\beta$ is a vector of regression coefficients, and
$Z(\cdot|\sigma^2, \boldsymbol\phi)$ is a Gaussian process with zero mean and
covariance function

\begin{align}
\text{cov}(\vec{x}, \vec{x}'| \sigma^2, \boldsymbol\phi) = \sigma^2 \, k (\vec{x}, \vec{x}'|
\boldsymbol\phi), \label{eq:correlation}
\end{align}

\noindent where $\sigma^2$ is the signal noise and $\boldsymbol\phi \in
\mathbb{R}^p_+$ denotes the {\it length-scale} parameters of the correlation
function $k(\cdot, \cdot)$. \added{The hyper-parameters of the Gaussian process
are therefore $ \boldsymbol \theta = (\boldsymbol \beta, \sigma^2, \boldsymbol
\phi)$. For practical simplicity it is commonly assumed that $h(\vec{x}) = 0$.
This allows to perform predictions and quantify the underlying uncertainty by
relying completely in the covariance function to capture the dependencies among
training runs. Thus $\boldsymbol\beta$ is dropped out of the discussion in
the remainder.} Note that for a pair of design points $(\vec{x},\vec{x}')$, the
function $k(\cdot, \cdot | \boldsymbol\phi)$ measures the correlation between
$\eta(\vec{x})$ and $\eta(\vec{x}')$ based on their respective input
configurations. The correlation function is capable of measuring how close
different input configurations are, such that related inputs produce related
outputs in the simulator. The base of such measure is related to the Euclidean
distance in such a way that it weights differently each input variable. In this
work, the squared-exponential correlation function has been chosen due to its
tractability, namely

\begin{align}
k(\vec{x},\vec{x}'|\boldsymbol\phi) = \exp \left\lbrace - \frac{1}{2} \displaystyle \sum_{i=1}^p
\frac{(x_i - x'_i)^2}{\phi_i}\right\rbrace. \label{eq:covariance-func}
\end{align}

It is important to note that other authors
\citep{Neal1998a,Rasmussen2006,Murray2009} prefer the $\phi^2_i$
parameterisation in the denominator. In our case, it is more natural to use a
linear term, given that the  length-scale parameters are restricted to lie in
the positive orthant. The linear terms can be interpreted as weights in the norm
used to measure closeness and sensitivity to changes in each dimension.

As a consequence of the Gaussian process prior, the joint probabilistic model
for the vector of outputs $\vec{y}$, given the \added{hyper-}parameters
\deleted{$\boldsymbol\beta,$}$\sigma^2$, $\boldsymbol\phi$ and the design points
$X$, can be written as

\begin{align}
{\bf y} | X, \sigma^2, \boldsymbol\phi \sim \mathcal{N}( 0, 
\sigma^2 \, K ), \label{eq:normal}
\end{align}

\noindent where \deleted{$H$ is the {\it design matrix} whose rows are the inputs
$h(\vec{x}_i)^\top $ and }$K$ is the correlation matrix with elements $K_{ij} =
k(\vec{x}_i, \vec{x}_j| \boldsymbol\phi)$ for all $i,j = 1, \ldots, n$.

\subsection{Hyper-parameter marginalisation}
 
The \added{hyper-}parameters of the Gaussian process emulator are commonly
unknown before the training phase, which adds uncertainty to the surrogate. A
common practice in the literature is to fix them to their Maximum Likelihood
value. Though it has been widely accepted, this approach does not entirely treat
the emulator as a probabilistic model and uncertainty quantification through it
might become limited. On the other hand, if one acknowledges the parameters as
random variables, robust estimators can be built through numerical
integration \replaced{by}{.  By} marginalising them via samples generated from
their posterior distribution. \deleted{This allows to incorporate all possible
configurations of the surrogate in light of the evidence shed by the training
runs.} \replaced{Predictions }{After this process, predictions} for $y^*$, given
a  non-observed configuration $\vec{x}^*$, can be performed using all the
evidence provided by the available data $\mathcal{D} = ({\bf y}, X)$, exploiting
the posterior distribution of the hyper-parameters, namely

\begin{align}
p(y^*|\vec{x}^*, \mathcal{D}) = \int_\Theta p(y^*|\vec{x}^*, \mathcal{D}, \boldsymbol\theta) \,
p(\boldsymbol\theta | \mathcal{D}) \, d\boldsymbol\theta \label{eq:pred_post},
\end{align}

\noindent where $\boldsymbol\theta = (\sigma^2, \boldsymbol\phi)$ denotes the
vector of hyper-parameters. Note that the model used for predictions $y^*$,
given the data and $\boldsymbol\theta$, is a Gaussian random variable, which
makes the model closed under the Gaussian family \citep[see][]{Oakley1999}.
\deleted{Additionally, note that each possible specification of
$\boldsymbol\theta$ is a realisation of a Gaussian random variable, by which
referring to $\boldsymbol\theta$ as the Gaussian process' hyper-parameters is
appropriate.}

\added{By means of a sample from the posterior distribution, the mean and
covariance functions of the predictive posterior can be written through a
mixture of Gaussians \citep{Garbuno2015} as}

\deleted{Due to its computational complexity, numerical integration of
\eqref{eq:pred_post} is often avoided. Instead, it is commonly assumed that the
MLE of the likelihood}

\ifdraft{\begin{color}{red}
\begin{align}
\mathcal{L}(\boldsymbol\theta) = p(\vec{y}|X,\boldsymbol\beta, \sigma^2, \boldsymbol\phi) 
\end{align}
\end{color}}{}
\deleted{or the MAP estimate from the posterior distribution }

\ifdraft{\begin{color}{red}
\begin{align}
p(\boldsymbol\theta | \mathcal{D}) \propto p({\bf y}|X,\boldsymbol\beta, \sigma^2, \boldsymbol\phi)
\, \, p(\boldsymbol\beta, \sigma^2, \boldsymbol\phi)
\end{align}
\end{color}}{}
\deleted{are capable of taking into account all the uncertainty in the model. However,
when either the likelihood ?? is a non-convex function or the
posterior ?? is a multi-modal distribution, conventional
optimisation routines may only find local optima. Additionally, there are
degenerate cases when it is crucial to estimate the integral in
\eqref{eq:pred_post} by means of Monte Carlo simulation instead of proposing a
single candidate \citep{Andrianakis2012}. It is possible to approximate the
integrated predictive distribution in \eqref{eq:pred_post} through MCMC as }

\ifdraft{\begin{color}{red}
\begin{align}
p(y^*|\vec{x}^*, \mathcal{D}) \approx \sum_{i=1}^N w_i \,\, p(y^*|\vec{x}^*, \mathcal{D},
\boldsymbol\theta_i) , 
\end{align}
\end{color}}{}
\deleted{ \noindent where $\boldsymbol\theta_i$ is obtained by a sampling scheme
which is capable of managing multi-modal distributions. The coefficients $w_i$
denote the weights of each sample generated. Since each term $p(y^*|\vec{x}^*,
\mathcal{D}, \boldsymbol\theta_i)$ corresponds to a Gaussian density function,
the predictions are made by a mixture of Gaussians. It can be shown
\citep{Garbuno2015} that, if an emulator with input $\vec{x}^*$ and output $y^*$
has a posterior density as in \eqref{eq:mix_posterior}, then its mean and
covariance functions are }

\begin{align}
\mu(\vec{x}^*) &= \sum_{i=1}^N w_i \,\, \mu_i(\vec{x}^*), \label{eq:mix_mean}\\
\text{cov}(\vec{x}^*,\vec{x}') &= \sum_{i=1}^N w_i \,\, \left[ (\mu_i(\vec{x}^*) - \mu(\vec{x}^*))
(\mu_i(\vec{x}') - \mu(\vec{x}')) + \text{cov}(\vec{x}^*,\vec{x}' | \boldsymbol\theta_i) \right],
\label{eq:mix_cov}
\end{align}
\noindent where $\mu_i(\vec{x}^*)$ is the expected value of the \replaced{probability
model}{likelihood distribution} of $y^*$ conditional on the hyper-parameters
$\boldsymbol \theta_i$, the training runs $\mathcal{D}$ and the input
configuration $\vec{x}^*$. From equation \eqref{eq:mix_cov} it follows that the
variance (also known as the prediction error) of an untested configuration
$\vec{x}^*$ is

\begin{align}
s^2(\vec{x^*}) &= \sum_{i=1}^N w_i \,\, ( (\mu_i(\vec{x^*}) - \mu(\vec{x^*}))^2 +
s_i^2(\vec{x^*})). \label{eq:mix_sigma}
\end{align} 
\noindent This results in a more robust estimation of the prediction error,
since it balances the predicted error in one sample with how far the
prediction of such sample is from the overall estimation of the mixture.

\subsection{Prior distributions} 

The predictive posterior distribution in equation \eqref{eq:pred_post} requires
the specification of a prior distribution $p(\sigma^2,
\boldsymbol\phi)$\deleted{in equation \eqref{eq:posterior}}. Weak prior
distributions have been used for \replaced{$\boldsymbol\phi$}{
$\boldsymbol\beta$} and $\sigma^2$ \citep{Oakley1999}, namely

\begin{align}
 p(\sigma^2, \boldsymbol\phi) \propto	\frac{p(\boldsymbol\phi)}{\sigma^2},
\label{eq:prior}
\end{align}

\noindent where \deleted{it is assumed that both the covariance and the mean
hyper-parameters are independent. Even more, $\boldsymbol\beta$ and } $\sigma^2$
is assumed to have a non-informative distribution. For the length-scale 
hyper-parameters $\boldsymbol\phi$, the reference prior \citep{Berger1992a,
Berger2009} allows for an objective framework  \added{in which} the uncertainty
of $\boldsymbol\phi$ can be accounted for. It requires no previous knowledge,
such that the training runs are the only source of information for the inference
process. Additionally, the reference prior is capable of ruling out subspaces of
the sample space of the hyper-parameters \citep{Andrianakis2011}, thus reducing
regions of possible candidates in the mixture model \replaced{expressed}{
proposed as} in equations \eqref{eq:mix_mean}, \eqref{eq:mix_cov} and
\eqref{eq:mix_sigma}. This allows for prior distributions without concerns about
expert knowledge of feasible regions for the hyper-parameters. Unless otherwise
stated, the reference prior for Gaussian processes developed by \cite{Paulo2005}
is used in this work. Note however that there are no known analytical
expressions for the derivatives of this prior, which limits its application to
samplers that require first-order information like HMC. Additionally, it is
important to note that there are other possibilities available for the prior
distribution of $\boldsymbol\phi$. Examples of these are the log-normal or 
log-Laplacian distributions, which can be interpreted as a regularisation in
the norm of the parameters. Other alternatives suggest a decaying prior
\citep{Andrianakis2011} or a weakly informative distribution such as a gamma
with appropriate parameters. If needed, elicited priors from experts can also 
be used \citep{Oakley2002a}.

\subsection{Marginalising the nuisance hyper-parameters}

The \deleted{set of} hyper-parameters \replaced{$\boldsymbol \phi$ and
$\sigma^2$}{of the Gaussian process} may be potentially different in terms of
scales and dynamics, as explained \added{previously} in \cite{Garbuno2015}.
\deleted{By using }Gibbs sampling \replaced{might help with this limitation}{
one can possibly cope}, although it is well-known that such sampling scheme can
be inefficient when used for multi-modal distributions in high-dimensional
spaces. The  Metropolis-Hastings sampler exhibits the same problem\replaced{.
For this reason, this work focuses}{, which leads us to focus} on
$\boldsymbol\phi$ to perform the inference based in the correlation function. To
achieve this, \deleted{$\boldsymbol\beta$ and }$\sigma^2$ is considered as a
nuisance parameter and is integrated out from the posterior. \added{This way,
all inference is driven by the length-scale hyper-parameters of the correlation
function. Note that other authors follow different approaches in the inference
problem of Gaussian processes as emulators. For example, \cite{Vernon2010} focus
on the global trend function $ h(\cdot)^\top \boldsymbol \beta$ since it allows
to incorporate expert knowledge on the computer model being emulated. That is, a
fully parametrised Gaussian process is considered}. The \replaced{probability
model}{ joint distribution} of the training runs and the prior distribution of
the hyper-parameters (equations \eqref{eq:normal} and \eqref{eq:prior}) allow to
identify an \deleted{Gaussian -}inverse-gamma model \replaced{for}{ of
$\boldsymbol\beta$ and} $\sigma^2$, which after integration, can be shown to
yield the integrated posterior distribution

\begin{align}
p(\boldsymbol\phi | \mathcal{D}) \propto p(\boldsymbol\phi) \, (\hat{\sigma}^2)^{-\frac{n-p}{2}} \,
|K|^{-\frac{1}{2}}, \label{eq:integrated_post}
\end{align}
\noindent where 

\begin{align}
\hat{\sigma}^2 &= \frac{\vec{y}^\top \, K^{-1} \, \vec{y}}{n-1}
\end{align}
\deleted{\noindent and}
\ifdraft{\begin{color}{red}
\begin{align}
\hat{\boldsymbol\beta} &= (H^\top K^{-1} H)^{-1}H^\top K^{-1} \vec{y}
\end{align}
\end{color}}{\noindent}
\noindent is an estimator of the signal noise $\sigma^2$ \deleted{and
regression coefficients $\boldsymbol\beta$} \cite[see][for further
details]{Oakley1999}. Finally, the predictive distribution conditioned on the
remaining set of hyper-parameters, $\boldsymbol\phi$, follows a t-student
distribution with mean and correlation functions

\begin{align} \mu(\vec{x}^*| \boldsymbol\phi) =& \,\, t(\vec{x}^*)^\top  K^{-1}
\vec{y} , \\ \text{corr}(\vec{x}^*,\vec{w}^*|\boldsymbol\phi) =& \,\,
k(\vec{x}^*, \vec{w}^* | \boldsymbol\phi) - t(\vec{x}^*)^\top K^{-1} \,
t(\vec{w}^*) , \end{align} 

\noindent where $\vec{x}^*$, $\vec{w}^*$ denote a pair of untested
configurations and $t(\vec{x}^*)$ denotes the vector obtained by computing the
covariance of a new input configuration with every design point available for
training $t(\vec{x}) = (k(\vec{x},\vec{x}_1|\boldsymbol\phi), \ldots, k(\vec{x},
\vec{x}_n| \boldsymbol\phi))^\top $. Note that both the mean and the correlation
of unseen input configurations depend solely on the correlation function 
hyper-parameters $\boldsymbol\phi$. \cite{MacKay1996} has previously discussed
the treatment of nuisance parameters and integrated posteriors. In the context
of Gaussian process emulators, this discussion allows to \deleted{reduce the
dimensionality of the problem and} overcome the limitations of samplers when
faced to different dynamics posed by each subset of hyper-parameters.

In light of the above, this paper focuses on the correlation function
$k(\cdot,\cdot)$ in equation \eqref{eq:correlation}, since the structural
dependencies of the training runs to allow prediction in the outputs of the
simulator is recovered by it. 
\subsection{Numerical stability}

Numerical stability is usually difficult to guarantee when implementing
Gaussian processes. As explored previously by \cite{Andrianakis2012} and
\cite{Ranjan2011}, a term  can be added in the covariance matrix $K$ in order to
preserve diagonal dominancy, that is, to add a {\it nugget} hyper-parameter
$\boldsymbol\phi_\delta$ such that

\begin{align}
K_\delta = K + \boldsymbol\phi_\delta \, I, \label{eq:nugget_cov}
\end{align}
\noindent is positive definite. This results in the stochastic simulator 

\begin{align}
y_i = \eta(\vec{x}_i) + \sigma^2 \, \boldsymbol\phi_\delta, 
\end{align}
where the term $\sigma^2 \, \boldsymbol\phi_\delta$ is used to account for the
variability of the simulator that cannot be explained by the correlation
function. The inclusion of the nugget modifies the posterior distribution, and
possibly adds new modes. In such scenario, the use of multi-modal oriented
samplers is crucial for performing the Monte Carlo approximation of equation
\eqref{eq:pred_post}. In case that the resulting model is assessed as not
appropriate, a regularisation term (an elicited prior) can be added to penalise
\deleted{such} regions \added{of local modes} \citep{Andrianakis2012}.

As previously discussed by \cite{Ranjan2011}, a uniform prior distribution
$U(10^{-12}, 1)$ is \replaced{adopted for the nugget hyper-parameter
$\phi_\delta$}{ used for our framework}. The lower bound guarantees stability in
computations associated with the covariance matrix. The upper bound forces the
numerical noise of the simulator to be smaller than the signal noise of the
emulator itself. By considering the modified correlation matrix in equation
\eqref{eq:nugget_cov}, the previous assumptions regarding the original noise
parameter $\sigma^2$ remain unchanged,  as one can still marginalise it as a
nuisance parameter with the non-informative prior used \citep{DeOliveira2007}.


\section{Slice sampling \label{sec:ss}  } 

The slice sampling algorithm \citep{Neal2003a} is a method to simulate
a Markov chain of a random variable $\boldsymbol\theta \in \Theta$.
This is done by introducing an auxiliary random variable $u \in \mathcal{U}
\subseteq \mathbb{R}$ and sampling from the joint distribution on the extended
space $\Theta \times \mathcal{U}$. The marginal of $\boldsymbol\theta$ is
recovered by disregarding the values of $u$ in the Markov chain, a consequence
of defining an appropriate conditional distribution for $u$, given
$\boldsymbol\theta$. The samples are generated by an iterative Gibbs sampling
schedule to recover pairs $\{ (\boldsymbol\theta_i, u_i)\}_{i=1}^N$, which
follow the joint density probability distribution

\begin{align}
\pi(\boldsymbol\theta,u) \propto I_{\{u < \pi(\boldsymbol\theta)\}}(\boldsymbol\theta, u),
\end{align}

\noindent where $I_E(\cdot)$ is the indicator function for the set $E \subset
\Theta \times \mathcal{U}$, and $\pi(\boldsymbol\theta)$ is the target
distribution of $\boldsymbol\theta$. Slice Sampling first generates $u$ from the
conditional \added{distribution of} $u\, | \, \boldsymbol\theta$ specified as a
uniform \deleted{distribution}on the interval $(0, \pi(\boldsymbol\theta))$. It then
samples $\boldsymbol\theta$, conditioned in $u$ from a uniform distribution in
the \textit{slice} defined by the set

\begin{align}
S_u = \{ \boldsymbol\theta : u < \pi(\boldsymbol\theta) \}. \label{eq:slice_set}
\end{align}
Since the marginal satisfies $ \int_0^{\pi(\boldsymbol\theta)} \pi(
\boldsymbol\theta , u ) \, du = \pi(\boldsymbol\theta) $, samples from the
target distribution can be recovered by disregarding the auxiliary component of
the joint samples. If the target distribution is a non-normalised probability
density $f(\boldsymbol\theta)$ then the joint distribution can be written as

\begin{align}
\pi(\boldsymbol\theta,u) = \frac{1}{Z} \, I_{\{u < f(\boldsymbol\theta) \}}(\boldsymbol\theta, u),
\end{align}
where \deleted{$Z$ is the normalising constant for the target, that is} $Z =
\int_\Theta f(\boldsymbol\theta) \, d\boldsymbol\theta$ and the previous
considerations for the marginal of $\boldsymbol\theta$ follow. In the context of
Gaussian processes it should be noted that floating-point underflows are common
due to ill-conditioning of the matrix $K$ in equation
\eqref{eq:integrated_post}. Thus, in order to compute stable evaluations of the
target distribution in Slice Sampling, it is preferable to evaluate the negative
logarithm of the target density. In such case, equation \eqref{eq:slice_set} can
be computed as stated in the following proposition.

\begin{proposition} \label{prop:slice}

Given the state of the Markov chain $\boldsymbol\theta_0$, the uniform
distribution for the next candidate has support in the slice given by 

\begin{align}
S_{\boldsymbol\theta_0} = \{ \boldsymbol\theta : z > \mathcal{H}(\boldsymbol\theta) \}, \label{eq:slice}
\end{align}
where $\mathcal{H}(\cdot)$ denotes the negative logarithm of the target density
and $z = \mathcal{H}(\boldsymbol\theta_0) + e$, with $e$ distributed as an
exponential random variable with mean equal to 1. 
\end{proposition}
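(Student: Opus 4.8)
The plan is to derive the reformulation directly from the original slice construction in \eqref{eq:slice_set} by a monotone change of the auxiliary variable, so that both the exponential noise $e$ and the reversed inequality emerge automatically. Throughout I would work with the (possibly unnormalised) target, writing $\mathcal{H}(\boldsymbol\theta) = -\log \pi(\boldsymbol\theta)$; since the argument only ever compares the target at two points, the normalising constant $Z$ never enters, which is precisely what makes the logarithmic form numerically preferable in the Gaussian process setting.

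First I would recall that, conditioned on the current state $\boldsymbol\theta_0$, the auxiliary variable is drawn uniformly on $(0, \pi(\boldsymbol\theta_0))$ and the slice is $S_u = \{\boldsymbol\theta : u < \pi(\boldsymbol\theta)\}$. The key step is to reparametrise $u = \pi(\boldsymbol\theta_0)\, v$ with $v$ uniform on $(0,1)$, so that drawing $u$ uniformly on $(0,\pi(\boldsymbol\theta_0))$ is equivalent to drawing $v$ on the unit interval. Setting $z = -\log u$ then gives
\begin{align}
z = -\log u = -\log\pi(\boldsymbol\theta_0) - \log v = \mathcal{H}(\boldsymbol\theta_0) - \log v. \nonumber
\end{align}

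Next I would invoke the standard transformation fact that $e := -\log v$ follows an exponential distribution with unit mean whenever $v$ is uniform on $(0,1)$, which is checked in one line from the cumulative distribution function $P(e \le t) = P(v \ge e^{-t}) = 1 - e^{-t}$ for $t \ge 0$. Substituting yields $z = \mathcal{H}(\boldsymbol\theta_0) + e$ exactly as claimed. Finally, since $-\log$ is strictly decreasing, the membership condition $u < \pi(\boldsymbol\theta)$ is equivalent to $-\log u > -\log\pi(\boldsymbol\theta)$, that is $z > \mathcal{H}(\boldsymbol\theta)$; hence $S_u = S_{\boldsymbol\theta_0}$ and the second Gibbs step, sampling $\boldsymbol\theta$ uniformly on the slice, is left unchanged.

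There is no substantive obstacle here: the result is a change of variables combined with the uniform-to-exponential identity. The only points demanding care are bookkeeping ones, namely ensuring the inequality is reversed correctly when passing to the negative logarithm, and noting that the reparametrisation remains valid for the unnormalised density so that $Z$ plays no role. I would close by remarking that this confirms the scheme targets the same joint distribution as standard slice sampling, now expressed entirely through $\mathcal{H}$ and thus robust to the floating-point underflows caused by ill-conditioning of $K$.
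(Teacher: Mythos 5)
Your proof is correct and follows essentially the same route as the paper's: writing the auxiliary uniform variable on $(0,\pi(\boldsymbol\theta_0))$ as $\pi(\boldsymbol\theta_0)$ times a unit uniform, taking negative logarithms to reverse the inequality, and invoking the standard fact that $-\log v$ is exponential with mean $1$. Your additional remarks (the one-line CDF verification and the observation that the normalising constant $Z$ cancels) are harmless elaborations of steps the paper leaves implicit.
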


\begin{proof} 

The result follows from the fact that for a given state
$\boldsymbol\theta_0$ of the Markov chain, the auxiliary uniform random variable
defining the slice can be written as the product $u \, f(\boldsymbol\theta_0)$
with $u$ uniformly distributed in the interval $(0,1)$. Thus, the slice is
defined as

\begin{align}
S_{\boldsymbol\theta_0} & = \{ \boldsymbol\theta : u \, f(\boldsymbol\theta_0) < f(\boldsymbol\theta) \} \nonumber \\
& = \{ \boldsymbol\theta : - \log ( f(\boldsymbol\theta_0) ) -\log(u)> -\log( f(\boldsymbol\theta) ) \} \nonumber \\
& = \{ \boldsymbol\theta : \mathcal{H}(\boldsymbol\theta_0) + e> \mathcal{H}(\boldsymbol\theta) \} ,
\end{align}
where it is easy to prove that $e = - \log(u)$ is distributed as an exponential
random variable with mean 1 and $ \mathcal{H} (\cdot) $ denotes the negative
logarithm of the target density.
\end{proof}

The main concern when implementing Slice Sampling is the ability to sample
uniformly from the slice. In one-dimensional applications, the slice can be
defined by several methods. The canonical example is a stepping-out and
shrinkage procedure which aims to extend an initial interval until the edges are
outside the slice and then shrinking it as samples generated from the interval
are rejected \citep[see][for further details]{Neal2003a}.

\subsection{Adaptive slice sampling} \label{subsec:ass}

For multivariate distributions, the concept of the slice extends naturally.
However, methods based on intervals (\eg the stepping-out and shrinking
procedure) slow dramatically as the dimension of the problem increases. This is
due to the generalisation of intervals as hyper-rectangles in $\mathbb{R}^p$ and
the need to compute the target function for each vertex a repeated number of
times along the expansion and shrinkage of the boundaries. For Gaussian process
emulators, the task of evaluating the target density becomes expensive\added{, a
consequence of the non-parametric nature of the model and the computational cost
of evaluating equation \eqref{eq:pred_post}.} \replaced{If}{ if} multiple
evaluations are needed for \replaced{the construction of }{ each state in} the
Markov chain \added{either because of a high rejection rate, difficult
characterisation of the slice or if longer chains are required, simulation by
MCMC with slice sampling becomes computationally expensive and inefficient.}
Therefore, other alternatives are preferable.

This work employs a framework proposed by \cite{Neal2003a} for adaptive
slice sampling in multivariate applications. The key idea is the use of the
information provided by the rejected samples in order to lead the future
generation of a candidate inside the slice. In this framework, the evidence
gathered by the rejected candidates is referred to as \textit{crumbs}, as they
will be ``followed'' towards the slice. \added{For a more detailed discussion
see \citep{Neal2003a}.} \deleted{The algorithm starts with a given candidate
$\boldsymbol\theta_0$ and a slice defined by a uniform random variable $u$ in
the interval $(0, f(\boldsymbol\theta_0))$. The first crumb, denoted by
$\varsigma_1$, is drawn from a density distribution conditioned on the slice and
the current point, that is $g_1(c | u,\boldsymbol\theta_0)$. Let
$\boldsymbol\theta_1^*$ be the first candidate drawn from the proposal density
$h_1(\boldsymbol\theta^* | u, \varsigma_1) \propto g_1(\varsigma_1 |
u,\boldsymbol\theta^*)$. If the candidate is on the slice then it becomes the
next state of the Markov chain. If it is rejected, a second crumb is generated
from a density $g_2(c | \boldsymbol\theta_0, u, \varsigma_1,
\boldsymbol\theta_1^*)$, which may depend on the previous rejected sample and
crumb, as well as on the current state of the chain and slice level. The next
candidate is then drawn from a density $ h_2( \boldsymbol\theta^* | u,
\varsigma_1, \boldsymbol\theta_1^*, \varsigma_2) \propto g_1(\varsigma_1 |
u,\boldsymbol\theta^*) \, g_2(\varsigma_2 | \boldsymbol\theta^*, u, \varsigma_1,
\boldsymbol\theta_1^*) $. If the second candidate is inside the slice then a
Gibbs update is performed to the slice level and the procedure repeats with the
new state of the Markov chain. However, if the candidate is rejected then a
third crumb is generated following the procedure described above, conditioned on
the previous crumbs generated, the current state of the chain and the rejected
samples. It can be proved \citep{Neal2003a} that this procedure generates a
Markov chain leaving the target distribution $f(\boldsymbol\theta)$ invariant,
leading to proposal distributions more concentrated as the number of crumbs
increases. This is due to the fact that the procedure can be interpreted as
generating samples from pseudo-posterior distributions based on pseudo-data (the
crumbs).}

\section{Transitional annealed adaptive slice sampling} \label{sec:taass}

\added{As previously stated,} in order to marginalise the posterior predictive
distribution in equation \eqref{eq:pred_post}, Monte Carlo integration is
usually performed when aiming at a fully Bayesian treatment \added{of Gaussian
process surrogates}. This is usually done by Hybrid Monte Carlo
\citep{Neal1998a,Williams1996} which is capable of suppressing the Random Walk
behaviour of traditional MCMC methods. Nonetheless, the tuning of this kind of
algorithm is problem-dependent and expert knowledge is crucial for an optimal
sampling schedule. The development of Elliptical Slice Sampling
\citep{Murray2009} provides a framework for the simulation of the 
hyper-parameters of a Gaussian process with little tuning required from the
analyst \citep{Murray2010}. However, this is only applicable when the posterior
predictive distribution for the hyper-parameters is of the form

\begin{align}
p(\boldsymbol\theta| \mathcal{D}) \propto \mathcal{N}( l(\boldsymbol\theta) | \boldsymbol\mu, \Sigma) \, p(\boldsymbol\theta),
\end{align}
where $p(\boldsymbol\theta)$ denotes the prior distribution, $\mathcal{N}(\cdot
| \cdot, \cdot)$ is a Gaussian distribution and $l(\cdot)$ is a latent variable
that depends on the hyper-parameters. As it can be seen from the integrated
posterior in equation \eqref{eq:integrated_post}, this form is not valid in the
formulation followed. The difference \replaced{stems from $\sigma$ being
considered a nuisance parameter and the prior considered for
the length-scales}{ is a full regression model assumed in the mean function} of
the Gaussian process. \deleted{Note that by marginalising the regression coefficients and
signal noise of the Gaussian process, one aims at reducing the dimensionality of
the problem as well as coping with different scales for different sets of hyper-
parameters.}

In this setting, we propose Transitional Annealed Adaptive Slice Sampling
(TA$^2$S$^2$), which can also be used in other applications of Bayesian
inference and Stochastic optimisation. Based on Asymptotically
Independent Markov Sampling \citep{Beck} we formulate the \added{sampling}
problem to be solved as reminiscent of simulated annealing. \replaced{The
objective is to sample from intermediate posterior distributions
$p_k(\boldsymbol\phi | \mathcal{D})$ that eventually converge to the true
posterior. This is done by tempering the posterior distribution by means of a
monotonically decreasing sequence of temperatures $\tau_k$ converging to 1. Let
$\{p_k(\boldsymbol\phi | \mathcal{D})\}_{k = 1}^{\infty}$ be the sequence of
density distributions in the annealing schedule such that }{ We want to find the
solution to the optimisation problem}

\ifdraft{\begin{color}{blue}}{\begin{color}{black}}
\begin{align}
p_k(\boldsymbol\phi | \mathcal{D}) &\propto p(\boldsymbol\phi | \mathcal{D})^{1/\tau_k} = \exp \left\lbrace -
\mathcal{H}(\boldsymbol\phi | \mathcal{D} )/\tau_k \right\rbrace, \label{eq:target_level}
\end{align} 
\end{color}
\ifdraft{\begin{color}{red}
\begin{align}
\min_{\boldsymbol\phi \in \Phi} \mathcal{H} ( \boldsymbol \phi | \mathcal{D}),
\end{align}
\end{color}}{}
\noindent where $\mathcal{H} ( \boldsymbol \phi | \mathcal{D})$ denotes the
negative integrated log-posterior distribution of the length-scale 
hyper-parameters, given the set of training runs $\mathcal{D}$. 
\deleted{Let the set of optimal solutions to the problem above be denoted by}

\ifdraft{\begin{color}{red}
\begin{align}
\Phi^* = \left\lbrace \boldsymbol\phi \in \Phi \,:\, \boldsymbol\phi = \arg \min_{\boldsymbol\phi
\in \Phi} \, \mathcal{H}(\boldsymbol\phi | \mathcal{D} ) \right\rbrace , \label{eq:opt_set}
\end{align}
\end{color}}{}
\deleted{with $|\Phi^*| \geq 1$. Note that this formulation of the problem
allows for the existence of multiple global optimisers.} The algorithm provides
a sequence of nested subsets $\Phi_{k+1} \subseteq \Phi_{k}$ converging to the
set of \replaced{posterior samples}{ optimal solutions} $\Phi^*$. \deleted{That
way, if the algorithm is terminated in an intermediate annealing level, a sample
of local approximations to the global minimiser can be recovered. The annealing
is done by tempering the target distribution using a sequence of density
functions as in simulated annealing.} The temperature is learned through an
automatic mechanism to determine the sequence of distributions. \deleted{Let
$\{p_k(\boldsymbol\phi | \mathcal{D})\}_{k = 1}^{\infty}$ be the sequence of
density distributions in the annealing schedule such that}

\ifdraft{\begin{color}{red}
\begin{align}
p_k(\boldsymbol\phi | \mathcal{D}) &\propto p(\boldsymbol\phi | \mathcal{D})^{1/\tau_k} = \exp \left\lbrace -
\mathcal{H}(\boldsymbol\phi | \mathcal{D} )/\tau_k \right\rbrace,
\end{align} 
\end{color}}{}
\deleted{for a sequence of monotonically decreasing temperatures $\tau_k$
converging to zero.} By construction, the sample in the first level of annealing
is distributed uniformly on a \textit{practical support} of the sampling space
\citep[see][for more a detailed discussion]{Katafygiotis2007}. For the limiting
case, the samples are uniformly distributed in the \replaced{support of the
posterior density}{ set of optimal solutions}. Both these observations can be
summarised by

\begin{align}
\lim_{\tau \rightarrow \infty} p_\tau(\boldsymbol\phi| \mathcal{D}) &= U_\Phi(\boldsymbol\phi),
\label{eq:meta_prior}\\
\lim_{\tau \rightarrow 1} p_\tau(\boldsymbol\phi| \mathcal{D}) &= U_{\Phi^*}(\boldsymbol\phi),
\end{align}
where $U_A(\boldsymbol\phi)$ denotes a uniform distribution over the set $A$ for
every $\boldsymbol\phi \in A$.

\subsection{Annealing at level k} \label{subsec:annealing_trans}

This subsection focuses on the sampling carried out by TA$^2$S$^2$ at the $k$-th
level of the annealing sequence. It is therefore assumed that a sample from
level $k-1$, which is distributed according to $p_{k-1}(\boldsymbol\phi |
\mathcal{D}) $, has already been generated. Let $\boldsymbol\phi_1^{(k-1)},
\ldots, \boldsymbol\phi_N^{(k-1)}$ denote such sample and let $N$ be the sample
size in each annealing level. Following the ideas discussed in Section
\ref{subsec:ass} for Adaptive Slice Sampling, the \textit{crumb} formulation
will be exploited. The samples from the previous level play the role as the
crumbs to be followed to generate candidates from each slice. \added{Thus,
retaining information from the posterior landscape and limiting the
amount of evaluations of the integrated posterior, which can be expensive for a
reasonable number of training runs. }Firstly, note that Proposition 1 implies
the following

\begin{corollary}
The slice defined in the $k$-th annealing level, given the current state of the
Markov chain $\boldsymbol\phi_0$, is given by
\begin{align}
S_{\boldsymbol\phi_0}^k = \{ \boldsymbol\phi : z_k > \mathcal{H}(\boldsymbol\phi | \mathcal{D}) \}
\label{eq:slice_annealing},
\end{align}
where $z_k = \mathcal{H}(\boldsymbol\phi_0 | \mathcal{D}) + e_k$, with $e_k$ an
exponential random variable with mean $\tau_k$. \end{corollary}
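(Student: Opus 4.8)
The plan is to apply Proposition~\ref{prop:slice} directly to the tempered target distribution of the $k$-th annealing level, rather than to the raw posterior. First I would observe that, by equation~\eqref{eq:target_level}, the level-$k$ target is $p_k(\boldsymbol\phi | \mathcal{D}) \propto \exp\{ -\mathcal{H}(\boldsymbol\phi | \mathcal{D})/\tau_k \}$, so that its negative logarithm is $\mathcal{H}(\boldsymbol\phi | \mathcal{D})/\tau_k$. Treating this tempered density as the (non-normalised) target $f$ in Proposition~\ref{prop:slice}, the relevant negative-log target is $\mathcal{H}(\boldsymbol\phi | \mathcal{D})/\tau_k$, and the proposition immediately returns the slice

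\begin{align}
S_{\boldsymbol\phi_0} = \left\{ \boldsymbol\phi : \frac{\mathcal{H}(\boldsymbol\phi_0 | \mathcal{D})}{\tau_k} + e > \frac{\mathcal{H}(\boldsymbol\phi | \mathcal{D})}{\tau_k} \right\},
\end{align}

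with $e$ an exponential random variable of mean $1$.

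The remaining step is a simple rescaling. I would multiply both sides of the defining inequality by $\tau_k > 0$, which preserves the inequality and yields $\mathcal{H}(\boldsymbol\phi_0 | \mathcal{D}) + \tau_k e > \mathcal{H}(\boldsymbol\phi | \mathcal{D})$. Setting $e_k := \tau_k e$ and $z_k := \mathcal{H}(\boldsymbol\phi_0 | \mathcal{D}) + e_k$ then recovers exactly the set $S_{\boldsymbol\phi_0}^k$ of equation~\eqref{eq:slice_annealing}, so the geometry of the slice is identical to the base case up to this change of variables.

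The only substantive point to verify is the distribution of $e_k$, and this is where I would place the single line of genuine content: by the scaling property of the exponential law, if $e \sim \mathrm{Exp}(1)$ has mean $1$, then for any $\tau_k > 0$ the variable $e_k = \tau_k e$ is again exponential, now with mean $\tau_k$ (equivalently rate $1/\tau_k$). This follows at once from the change of variables $e \mapsto e_k/\tau_k$ in the density, or from the moment generating function. Since every manipulation above is an equivalence, I do not expect any real obstacle; the exponential scaling is the only technical ingredient, and the corollary then follows as a direct specialisation of Proposition~\ref{prop:slice}.
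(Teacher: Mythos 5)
Your proof is correct and is precisely the argument the paper intends: the corollary is stated as an immediate consequence of Proposition~1 applied to the tempered density $p_k(\boldsymbol\phi \mid \mathcal{D}) \propto \exp\{-\mathcal{H}(\boldsymbol\phi \mid \mathcal{D})/\tau_k\}$, whose negative logarithm is $\mathcal{H}(\boldsymbol\phi \mid \mathcal{D})/\tau_k$, and your rescaling by $\tau_k$ together with the exponential scaling property ($e \sim \mathrm{Exp}(\text{mean } 1) \Rightarrow \tau_k e \sim \mathrm{Exp}(\text{mean } \tau_k)$) is exactly the step the authors leave implicit. You have simply made explicit what the paper asserts without a written proof, so there is nothing to correct.
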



As in other Sequential Monte Carlo algorithms \citep{DelMoral2006,Fearnhead2013},
let us define the importance weights of the samples $\boldsymbol\phi_1^{(k-1)},
\ldots, \boldsymbol\phi_N^{(k-1)}$ as

\begin{align}
\omega^{(k-1)}_j &= \frac{p_{k}\left( \boldsymbol\phi^{(k-1)}_j \right)}{p_{k-1}\left(
\boldsymbol\phi^{(k-1)}_j \right)} \propto \exp \left\lbrace - \mathcal{H}\left( 
\boldsymbol\phi^{(k-1)}_j | \mathcal{D} \right) \left( \frac{1}{\tau_k} -
\frac{1}{\tau_{k-1}}\right)\right\rbrace, \label{eq:unweight}\\
\overline{\omega}^{(k-1)}_j &= \frac{\omega^{(k-1)}_j}{\sum_{j=1}^N \omega^{(k-1)}_j},
\end{align}
where $\omega^{(k-1)}_j$ denotes the importance weights and
$\overline{\omega}^{(k-1)}_j$ the normalised importance weights. The weights
allow to measure the importance of each sample as being drawn for
the next annealing level.

The proposal for a new state of the Markov chain, given the current one
$\boldsymbol\phi_0$, is generated as follows. A slice is obtained as in equation
\eqref{eq:slice_annealing} by generating an exponential random variable with
mean $\tau_k$, thus defining the slice $S_{\boldsymbol\phi_0}^k $ for the
current state. A first crumb is randomly selected from the set of past
approximations that lie inside the slice. This means selecting a uniformly
distributed index $j$ from the set

\begin{align}
\mathcal{J} = \left\lbrace j \in \{1, \ldots, N\} \, : \, \boldsymbol\phi_j^{(k-1)} \in
S_{\boldsymbol\phi_0}^k \right\rbrace. 
\end{align}
The points $\boldsymbol\phi_1^{(k-1)}, \ldots, \boldsymbol\phi_N^{(k-1)}$ are
uniformly distributed in the approximation set $\Phi_{k-1}$ and will be used as
markers for the annealing level $k$. If the above index set is empty, there is
evidence of the annealing temperature being decreased too rapidly. A fail-safe
can be used by generating a crumb from a wide Gaussian distribution centred at
the current state of the Markov chain. Additionally, as it is done in other
Sequential Monte Carlo methods \citep{DelMoral2012}, a renewal component can be
added. The renewal is performed as the crumbs are selected from the markers, due
to the fact that relying on the sample from the previous level can lead to bias
in the simulations. To this end, if the index set $\mathcal{J} $ is empty or,
with probability $p_{\text{renew}}$, the crumb $\boldsymbol\varsigma_1$ will be
distributed as 

\begin{align}
\boldsymbol\varsigma_1 \sim \mathcal{N}( \boldsymbol\phi_0, c_0^2 \, \Sigma_k), 
\end{align}
where $c_0$ is a spread parameter associated with the annealing sequence, and
$\Sigma_k$ denotes a covariance matrix at level $k$. Typical choices for the
covariance matrix are the identity matrix $I_{p \times p}$ or a diagonal matrix
$\text{diag}\{d_1, \ldots, d_p\}$ which defines a different scale for each
variable. In order to use a better proposal in terms of scales and correlations
observed along the annealing sequence, we define $\Sigma_k$ as the weighted
covariance matrix from the weighted samples $\{( \overline{\omega}^{(k-1)}_j,
\boldsymbol\phi_j^{(k-1)})\}_{j=1}^N$. As discussed by \citep{Gelman1996}, the
spread parameter is set as $c_0 = 2.38/\sqrt{p}$, since it allows for efficient
transitions in Gaussian steps.

Once the first crumb is drawn, a first candidate $\boldsymbol\xi_1$ is
generated from the appropriate Gaussian distribution 

\begin{align}
\boldsymbol\xi_1 \sim \mathcal{N}(\boldsymbol\varsigma_1, c_0^2 \, \Sigma_k), 
\end{align}
where $c_0$ is a spread parameter for the proposals and $\Sigma_k$ defined as
above. In general, the $i$-th candidate for the next state of the Markov chain
can be generated as

\begin{align}
\boldsymbol\xi_i \sim \mathcal{N}\left(\overline{\varsigma}_i , \left(\frac{c_0}{i}\right)^2 \, \Sigma_k \right),
\end{align}
where $\overline{\varsigma}_i$ is the average of the crumbs generated so far,
as proposed by \citep{Neal2003a}. Note how the generation of new candidates in
the slice is narrower as the candidates are rejected by means of the parameter
$c_0/i$. However, the mean for the Gaussian proposal might not converge to a
point in the slice if the posterior is a multi-modal distribution. To cope with
this limitation, we propose to use a weighted average of the current state and
the crumb centre to enhance the mixing of the sampler. Namely, by sampling the
$i$-th candidate from a Gaussian distribution with mean

\begin{align}
\overline{\varsigma}_i^* &= \alpha_i \, \boldsymbol\phi_0 + (1- \alpha_i) \, \overline{\varsigma}_i
\end{align}
and covariance $ (c_0/i) \, \Sigma_k$. The weight parameter $\alpha_i$ can be
defined in terms of the number of crumbs previously rejected. Since it is
desirable that $\alpha_i \rightarrow 1 $ as $i$ increases, we can define it
either as $\alpha_i = (1 - 1/i)$ or $\alpha_i = (1 - \exp(-i))$. As confirmed by
our experiments, $\alpha_i$ is linearly-dependent on the crumb iteration, since
the exponential behaviour exhibits pronounced decay towards the current state,
causing Random Walk behaviour. The sampling in each annealing level is
summarised in Algorithm \ref{alg:annealing_level}. To avoid cluttered notation,
the conditioning on the design points $\mathcal{D}$ is dropped in the
remainder.

\begin{algorithm}[H]
\SetKwInOut{Input}{Input}
\SetKwInOut{Output}{Output}
\SetKwRepeat{Do}{do}{while}

\Input{\\
\vspace{-1.5mm}
\begin{itemize}[label=$\diamond$]\setlength{\itemsep}{-1mm}
\item $\boldsymbol\phi^{(k-1)}_1, \ldots, \boldsymbol\phi^{(k-1)}_N \sim p_{k-1}(\boldsymbol\phi)$,
generated at previous level;
\item $\boldsymbol\phi_1^{(k)} \in \Phi$, initial state of the chain;
\end{itemize}
\vspace{-1.5mm}
}

\Output{\\
\vspace{-2.5mm}
\begin{itemize}[label=$\diamond$]\setlength{\itemsep}{-2mm}
\item $\boldsymbol\phi^{(k)}_1, \ldots, \boldsymbol\phi^{(k)}_N \sim p_{k}(\boldsymbol\phi)$;
\end{itemize}
\vspace{-1.5mm}
}

\BlankLine

\Begin{
	Compute covariance matrix $\Sigma_k$ from the weighted samples\;
	\For{$i\leftarrow 2$ \KwTo $n-1$}{
		Define slice $S_{\boldsymbol\phi_i}^k$ as in \eqref{eq:slice_annealing} \;
		$l \leftarrow 0 $\; 
		\Do{ $\boldsymbol \xi_i \notin S_{\boldsymbol\phi_i}^k$ }{
			Increase $l$ and generate $u \sim U(0,1)$\; 
			\eIf {$|\mathcal{J}| \neq \emptyset$ or $u < p_{\text{renew}}$ }{
				Choose random $j$ from index set $\mathcal{J}$\; 
				$\varsigma_l = \boldsymbol\phi^{(k-1)}_j$ \;
			}{
				Generate $\varsigma_l \sim \mathcal{N} ( \boldsymbol\phi^{(k)}_i, c_0^2 \, \Sigma_k)$ \;
			}
			Define crumb as $\overline{\varsigma}_l^* = \alpha_l \, \boldsymbol\phi^{(k)}_i + (1-
\alpha_l) \, \overline{\varsigma}_l$ \; 

			Generate candidate $ \boldsymbol \xi_i \sim \mathcal{N} (
			\boldsymbol\phi^{(k)}_i, (c_0/l)^2 \, \Sigma_k)$ \;
			
		}
		Define new state of the chain $\boldsymbol\phi^{(k)}_{i+1} = \boldsymbol\xi_i $\;
	}
}
\caption{TA$^2$S$^2$ at annealing level $k$}\label{alg:annealing_level}
\end{algorithm}

\subsection{Overview of the full sampler}

The algorithm starts with a uniform sample in an admissible space $\Phi$, as
implied by the \textit{meta}-prior distribution in equation
\eqref{eq:meta_prior}. As a second step, the algorithm described in the previous
section is used to generate the samples of the first annealing level, that is
$\boldsymbol\phi_1^{(1)}, \ldots, \boldsymbol\phi_N^{(1)} \sim
p_1(\boldsymbol\phi)$. As mentioned before, this set of points \replaced{allows
to approximate the slices in the next annealing level and the areas where the
posterior mass is concentrated.}{ is the first approximation to the solution of
the optimisation problem}. As the sequence of temperatures converges to 1, we
expect to recover better approximations until \replaced{posterior samples are
generated. That is, }{a desired sample is simulated. Thus}, until a sample
$\boldsymbol\phi_1^{(k^*)}, \ldots, \boldsymbol\phi_N^{(k^*)}$ has been drawn
and is uniformly distributed in the set $\Phi^*$. \replaced{In the next section
we discuss}{We also need to determine a stopping criterion for the sampler, as
well as defining} how to learn the temperature sequence and the overall parallel
implementation achieved by embedding it on a transitional Markov chain schedule.

\subsubsection{Annealing schedule} \label{subsec:annealing-schedule}

The way the temperature sequence is determined is one of the most crucial
aspects of any simulated-annealing-based method. It is clear that if the change
of temperatures is abrupt the markers will degenerate quickly, as observed in
sequential Monte Carlo samplers. On the contrary, if the sequence of
temperatures decreases slowly the actual efficiency of the algorithm is
hindered, since sampling in a sequence of annealing levels is redundant for the
\replaced{generation of posterior sample}{approximation of the solution}.
Setting the temperature sequence beforehand requires prior knowledge of the
overall behaviour of the function $\mathcal{H}(\cdot)$ and the topology around
the set $\Phi^*$, both of which are generally not available.

Following the suggestion by \cite{Zuev2013}, the {\it Effective Sampling Size}
can be used as a measure of degeneracy of the chain in each annealing level
\citep[see][for further discussion]{Zuev2013}. This allows to measure how
similar the $(k-1)$-th and the $k$-th densities are. The effective sample size
can be approximated by

\begin{align}
\hat{n}_{\text{eff}}= \frac{1}{\sum_{i = 1}^N \left( \overline{\omega}^{(k-1)}_j \right) ^2}, 
\end{align}
where $\overline{\omega}^{(k-1)}_j$ is the normalised weight of sample
$\boldsymbol\phi^{(k-1)}_j$. Given the temperature of the previous level is
known, the problem is to determine the temperature of the next one. This is
done by determining a target threshold for $\hat{n}_{\text{eff}} $ in terms of
the size of the simulated set. Thus, given $\gamma \in (0, 1)$, the target
threshold is defined by $\gamma N = \hat{n}_{\text{eff}}$. Rewriting this
expression in terms of the unnormalised sample weights we obtain

\begin{align}
\frac{\sum_{j=1}^N \exp \left\lbrace -2 \mathcal{H} \left( \boldsymbol\phi_j^{(k-1)} \right) \, \left(
\frac{1}{\tau_k} - \frac{1}{\tau_{k-1}}\right) \right\rbrace }{\left( \sum_{j=1}^N \exp
\left\lbrace - \mathcal{H} \left( \boldsymbol\phi_j^{(k-1)} \right) \, \left( \frac{1}{\tau_k} -
\frac{1}{\tau_{k-1}}\right) \right\rbrace\right)^2} = \frac{1}{\gamma N},
\end{align}
which yields an equation for the unknown temperature $\tau_k$. Solving
the equation for $\tau_k$ can be done efficiently by standard numerical techniques
such as the bisection method.

The value of the threshold $\gamma$ affects the overall efficiency of the
annealing schedule. If a value close to zero is chosen, the resulting algorithm
will create few tempered distributions and this will result in poor
approximations. If $\gamma$ is close to 1, then there will be excessive
tempered distributions and redundant annealing levels. As suggested by
\cite{Beck}, and as confirmed by our experiments, a value of $\gamma = 0.5$
delivers acceptable efficiency.

\ifdraft{\subsubsection{{\color{red} Stopping criterion}}}{}

\deleted{In theory, the algorithm should run with $k$ increasing until the
temperature $\tau_{k} \rightarrow 0$, resulting in uniformly
distributed samples in the set $\Phi^*$. However, in practical implementations,
the absolute zero cannot be achieved and a stopping criterion
for the annealing sequence is necessary. In this situation, the sample
coefficient of variation (COV) of the objective function $\mathcal{H}(\cdot)$
can serve as a stopping criterion. Without loss of generality, it is assumed
that the objective $\mathcal{H}(\cdot)$ is a non-negative function
\citep{Zuev2013}. Let $\delta_k$ denote the
sample COV of the simulations $\mathcal{H}(\boldsymbol\phi^{(k)}_1 ),
\ldots, \mathcal{H}(\boldsymbol\phi^{(k)}_N )$, that is }

\ifdraft{\begin{color}{red}
\begin{align}
\delta_k = \frac{\sqrt{\frac{1}{N} \sum_{i = 1}^N \left( \mathcal{H}\left(\boldsymbol\phi_i^{(k)}
\right) - \frac{1}{N} \sum_{j = 1}^N \mathcal{H} \left(\boldsymbol\phi_j^{(k)} \right) \right)^2
}}{ \frac{1}{N} \sum_{j = 1}^N \mathcal{H} \left(\boldsymbol\phi_j^{(k)} \right)}. 
\end{align}
\end{color}}{}
\deleted{As noted in \citep{Garbuno2015} and \citep{Zuev2013} using $\delta_k$
yields a measure of the sensitivity of the objective (the integrated log-
posterior distribution) to the hyper-parameters, in the domain induced by the
annealing temperature $\tau_k$. If the set of approximated solutions
$\boldsymbol\phi_1^{(k)}, \ldots, \boldsymbol\phi_N^{(k)}$ lie in  $\Phi^*$,
then the sample COV will equal 0 exactly, since for all $ j$,
$\mathcal{H}(\boldsymbol\phi_j^{(k)}) = \min_{\boldsymbol\phi \in \Phi^*} \,
\mathcal{H}(\boldsymbol\phi)$. Thus, the stopping criterion is defined when a
threshold of the initial sample COV has been reached, that is}

\ifdraft{\begin{color}{red}
\begin{align}
\delta_k < \alpha \, \delta_0,
\end{align}
\end{color}}{}
\deleted{where $\alpha \in (0,1)$ defines the target threshold attained by the
sampling algorithm in the last annealing level. The value of $\alpha$ directly
affects the accuracy of the approximation achieved by the sampling algorithm. In
the context of Gaussian process emulators, our experiments suggest that
$\alpha = 0.1$ yields good approximations recovering robust configurations
of the surrogate model for error prediction and predictions made through
equations \eqref{eq:mix_mean} and \eqref{eq:mix_sigma}.}

\subsubsection{Parallel Markov chains} 

As described so far, the proposed algorithm can be computationally expensive if
the Markov chain of the samples is drawn sequentially. This is due to the
inversion of a $n\times n$ matrix and related products in equation
\eqref{eq:integrated_post}. Hence, it is desirable to speed up the process of
generating samples in each annealing level. In our context, the inversion of
such matrix is not prohibitive, since we assume that the set of training points
is expensive to acquire, however, a fast sampling algorithm is desired for a
complete Bayesian treatment of the problem. This way we can compensate the
drawbacks associated with an appropriate error estimation by using the emulator
in a Bayesian setting \citep[see][for a discussion]{Kennedy2001}. The idea of
parallelisation comes from an adaptation of the Transitional Markov Chain Monte
Carlo (TMCMC) method \citep{Ching2007} in the context of the annealed adaptive
slice sampling algorithm described previously.

The TMCMC algorithm builds a Markov chain from a target distribution in a
sequential schedule as in Sequential Monte Carlo \citep{DelMoral2007} and
Particle Filtering \citep{Andrieu2010}. That means that $N$ Markov chains are
started, each from the state of an initial Markov chain being drawn from the
prior distribution of the Bayesian inference problem. The key difference is that
the chains are allowed to communicate among each other by a {\it transition}
mechanism that allows to grow each chain differently within the same annealing
level, disregarding poor initial states for certain chains. The length of the
chain is determined by a probability proportional to the importance sampling
weight defined in equation \eqref{eq:unweight}. By doing so, the markers are
automatically selected in the updating sequence and concentrated around the
modes found during the annealing. This improves the mixing of the samples
generated in each annealing level.

Summarising, the proposed TA$^2$S$^2$ algorithm consists of Markov chains
generated as established in Algorithm \ref{alg:annealing_level}, the annealing
temperature being determined empirically by the effective sampling size
described in Section \ref{subsec:annealing-schedule} and stopped whenever
\replaced{the temperature reaches 1}{ the criterion of Section
\ref{subsec:stopping} is met}. The selection of the initial states of the Markov
chains and their growth length is a direct implementation of the TMCMC method
\citep{Ching2007} for Bayesian model updating.

\section{Numerical experiments} \label{sec:exp}

The following examples illustrate the effectiveness and robustness of
TA$^2$S$^2$ when sampling the hyper-parameters of Gaussian process surrogates.
The first example is Franke's function \citep{Haaland2012}, wich can have
challenging features when emulated. The second example is a five-dimensional
model \citep{Nilson1989} which has been previously used to test Gaussian
processes \replaced{meta-models}{ surrogates} \citep{Bastos2009}. The third
example is a ten-dimensional model for the weight of a wing of a light aircraft
\citep{Forrester2008a}. \deleted{Additionally, we present two examples that
illustrate the application of TA$^2$S$^2$ in other contexts. These are related
to Bayesian inference, assuming a scenario where the MAP estimate is not
appropriate and a more robust marginalisation is needed: first, an application
to Bayesian Neural Networks \citep{Beck} and second, an application to
Variational inference where the solution is known to have multiple local optima
\citep{Bishop2006}.} \deleted{Unless otherwise stated, it is assumed that the
global behaviour of the emulator can be fitted by a simple regression term with
$h(\vec{x})^\top = (1, x_1, \ldots, x_p)^\top $}. Concerning the nugget of the
surrogate, we perform a sigmoid transformation in order to sample all covariance
hyper-parameters with multivariate Gaussian distributions as discussed in
Section \ref{subsec:annealing_trans}. That is, we introduce an auxiliary
component $z_\delta$ and extend the vector of hyper-parameters $\boldsymbol\phi$
to $\mathbb{R}^{p+1}$. Finally, we compute the nugget as

\begin{align}
\theta_\delta = \frac{1-l_b}{1+\exp(-z_\delta)} + l_b ,
\end{align} 
where $l_b$ is the lower bound, which is set equal to $10^{-12}$ following the
discussions in \cite{Ranjan2011}. To incorporate the algorithm to the 
length-scale hyper-parameters, the sampling has been performed in logarithmic
space to avoid additional concerns for the non-negative restrictions
imposed to the aforementioned variables as in other sampling schedules
\citep{Neal1997}. The initial values of the algorithm, equation
\eqref{eq:meta_prior}, are set to a uniform distribution in a wide practical
range, that is the interval $\left[-7,7\right]$ for the length-scales. For the
nugget, a non-informative truncated beta distribution in the interval $\left[
l_b, 1 \right]$ has been considered.

The code was implemented in MATLAB and all examples were run in a GNU/Linux
machine with an Intel i5 processor with 8 Gb of RAM. For the purpose of
reproducibility, the code used to generate the examples in this paper is
available for download at \url{http://github.com/agarbuno/ta2s2_codes}.

\added{In order to contrast our proposed methodology with existing ones, we take
the Particle Learning sampler PLGP \citep{Gramacy2009} as a benchmark. This
sampler has proven effective for sampling the posterior distribution of the
hyper-parameters of a Gaussian process by means of tempering in a data-oriented
manner, \ie by feeding subsets of the training runs in each annealing level.
Although the proposed sampler can be implemented in an on-line fashion akin to
PLGP, we resort only at comparing them as strategies in batch applications. Note
that the extension to on-line learning tasks can be done by regarding the
posterior of a subset of data as the prior for the next set of training
runs. This can be followed easily as the re-weighting of the samples by a 
data-oriented alternative to \eqref{eq:unweight} can help adjust the importance
of the samples.}

\added{Following the discussion of \cite{Kohonen2006} proper scoring rules
should be used in order to compare the probability statements made by the
Gaussian process model resulting from the samples used to marginalise the
predictive posterior. In the context of Gaussian processes, both prediction and
error estimation are used to assess the quality of the surrogate, \ie the
estimated mean and variance. If a local scoring rule such as the negative
logarithm of predictive density (NLPD) is used to evaluate the generated
samples, we risk penalising heavily over-confident predictions and treat with
less rigour  under-confident far-off predictions. This is not desirable since it
is known that the full Bayesian treatment in Gaussian processes is preferred for
better error estimation in uncertainty analysis \citep{Kennedy2001}. In
contrast, by using distance-sensitive scoring rules such as the continuously
ranked probability score (CRPS) we aim for better placement of probability mass
near target values, although not exactly placed at the target. It is defined as}

\ifdraft{\begin{color}{blue}}{\begin{color}{black}}
\begin{align}
\text{CRPS}(F,x) = \int_{-\infty}^{\infty} (F(y) - \vec{1}\{y \geq x\})^2 dy,
\end{align}
\end{color}
\added{where $F$ is the cumulative predictive distribution and $x$ is the point
where it is verified. We assume a Gaussian approximation for the predictions
made by the Gaussian process emulator and use the mixture model expressed in
equations \eqref{eq:mix_mean} and \eqref{eq:mix_sigma} to be able to use the
complete mixture expression developed in \citep{Grimit2006} which we include for
completeness. That is, for a mixture of Gaussians the CRPS can be written as}

\ifdraft{\begin{color}{blue}}{\begin{color}{black}}
\begin{align}
\text{CRPS}\left( \sum_{m = 1}^N \omega_m \, \mathcal{N} ( \mu_m, s^2 )\, , \, x \right) = & 
\sum_{m = 1}^N \omega_m  \, A (x - \mu_m, s^2_m ) \, - \,  \nonumber \\
& \frac{1}{2} \sum_{m = 1}^N \sum_{n = 1}^N \omega_m \omega_n \, A\left(\mu_m - \mu_n, s^2_m + s^2_n\right), 
\end{align}
\end{color}
\added{\noindent where $\omega_i$ denotes the weight of the sample, $\mu_i$ is
the mean of $x$ given by sample $i$ and $s^2_i$ the corresponding estimated
variance. The function $A(\cdot, \cdot)$ is defined as }

\ifdraft{\begin{color}{blue}}{\begin{color}{black}}
\begin{align}
A(\mu,\sigma^2 ) = 2 \sigma f_{\mathcal{N}}\left( \frac{\mu}{\sigma}\right) +\mu \left( 2 \, F_{\mathcal{N}} \left( \frac{\mu}{\sigma}\right) -1 \right), 
\end{align}
\end{color}
\added{where $f_{\mathcal{N}}(\cdot)$ and $F_{\mathcal{N}}(\cdot)$ denote the
density and cumulative functions of a standard Gaussian random variable. It
should be noted that the CRPS does not possess an analytic expression for every
probability function used for prediction \citep{Grimit2006}, thus the choice of
using a mixture of Gaussians for the predictive posterior distribution instead
of $t$-distributions. }

\added{Other alternatives for scoring rules for Gaussian processes could be the
bootstrapped variance predictor of \cite{Hertog2006}. This predictor aims to
estimate the variance of the Gaussian process independent of the set of points
used for training. However, such approach leads to prefer under-confident
predictions not exactly around the target value and relies in the assumption of
infinite repeatability of the simulator experiments.This is not satisfied by
Bayesian analysis of computer code output (BACCO), since by assumption the
generation of training runs is limited by computational cost.}

\subsection{Franke's function} \label{subsec:franke}

Franke's function has been used to test Gaussian process emulators
\citep{Haaland2012}. Its complexity stems from the presence of two peaks and one
dip in its landscape. Let $f:\left[ 0, 1 \right]^2 \rightarrow \mathbb{R}$ be
such that

\begin{align}
f(\vec{x}) = \, & \, 0.75 \, \exp \left( - \frac{(9 x_1 - 2)^2}{4} - \frac{(9 x_2 - 2)^2}{4}
\right) + 0.75
\, \exp \left( - \frac{(9 x_1 + 1)^2}{49} - \frac{9 x_2 + 1}{10} \right) \nonumber\\
 \, & + 0.5 \, \exp \left( - \frac{(9 x_1 - 7)^2}{4} - \frac{(9 x_2 - 3)^2}{4} \right) - 0.2\, \exp
\left( -
(9x_1 - 4)^2 - (9x_2 -7)^2\right). 
\end{align}

To train the emulator, 20 design points were chosen using Latin hypercube
sampling (LHS). For testing purposes, 100 independent design points were chosen
by a second LHS. \replaced{Figure \ref{subfig:franke-level} shows the multi-
modal integrated log-posterior for a fully-parametrised Gaussian process
\citep{Garbuno2015}}{ Figure \ref{subfig:franke-level} shows that the log-
posterior distribution exhibits multi-modality}. Region A contains a mode with
no preference for any dimension. Regions B and C depict different asymptotic
behaviours of the emulator. In region B, the emulator behaves as linear
regression model, as noted by \cite{Andrianakis2012}. Region C corresponds to a
model which disregards the first dimension. \replaced{In Figure 
\ref{subfig:franke-sample} a set of samples obtained by applying TA$^2$S$^2$ is showed,
illustrating the ability to overcome possible multi-modal distributions that
arise in Bayesian analysis of expensive computer codes.}{ By applying
TA$^2$S$^2$, we obtain robust approximations to the MAP estimate, as shown in
Figure \ref{subfig:franke-sample}. This has been done by running the algorithm
with $N=2000$, obtaining 5 intermediate distributions. A MAP estimate of such
inference problem results in a Root-mean-squared error (RMSE) of 0.1557, in
contrast with the mixture estimator in equation \eqref{eq:mix_mean}, which
achieves a RMSE of 0.1069. This is due to the MAP estimate being located in
Region (B), which is a linear model of the inputs. The mixture model helps
alleviate this shortcoming by introducing emulators with local dependency among
the training runs.}

\begin{figure}[H]
\centering
\subfloat[Level curves]{\includegraphics[draft=false,width=.35\linewidth]{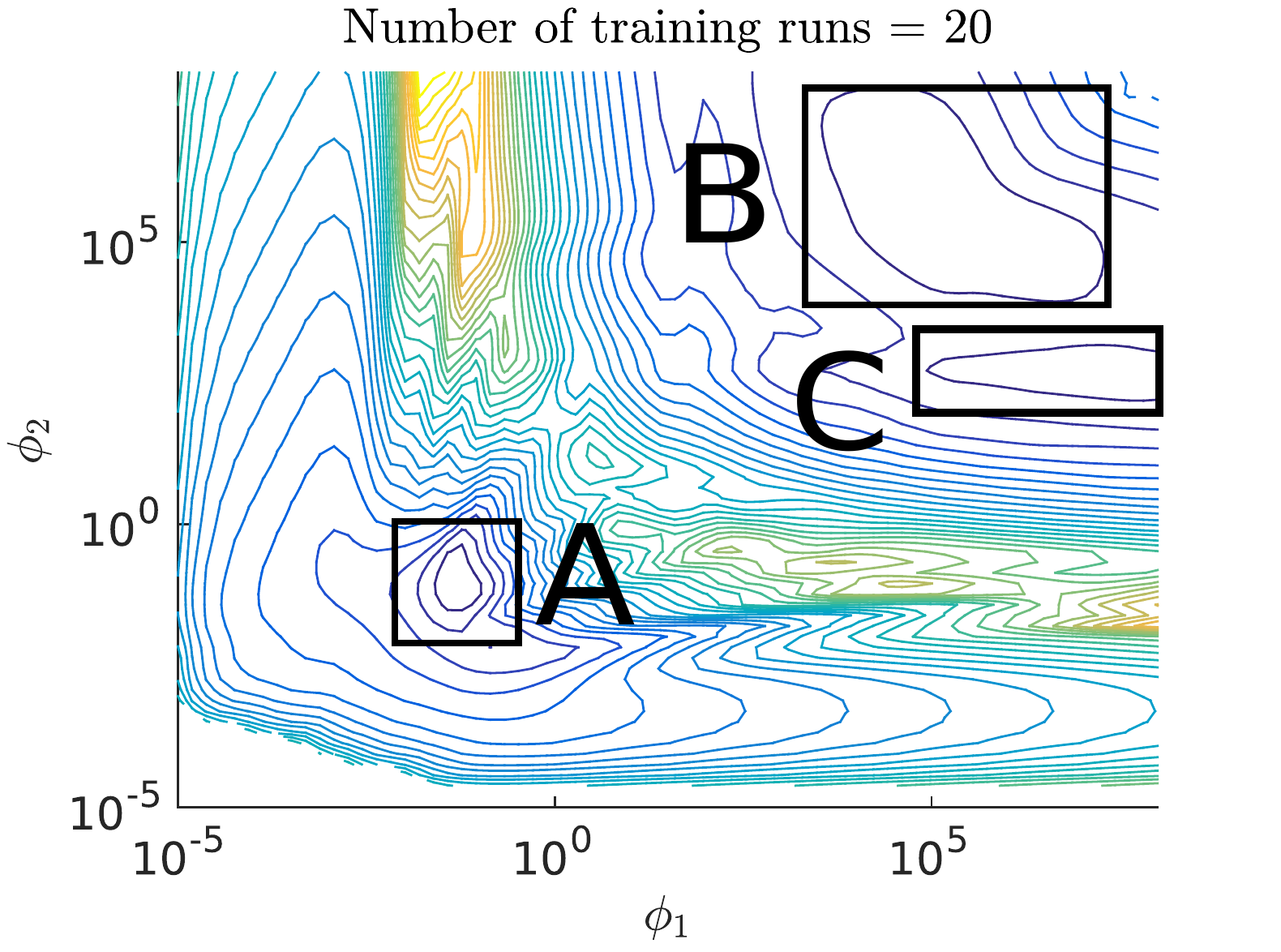}
\label{subfig:franke-level}}
\subfloat[TA$^2$S$^2$ samples]{\includegraphics[draft=false,width=.35\linewidth]{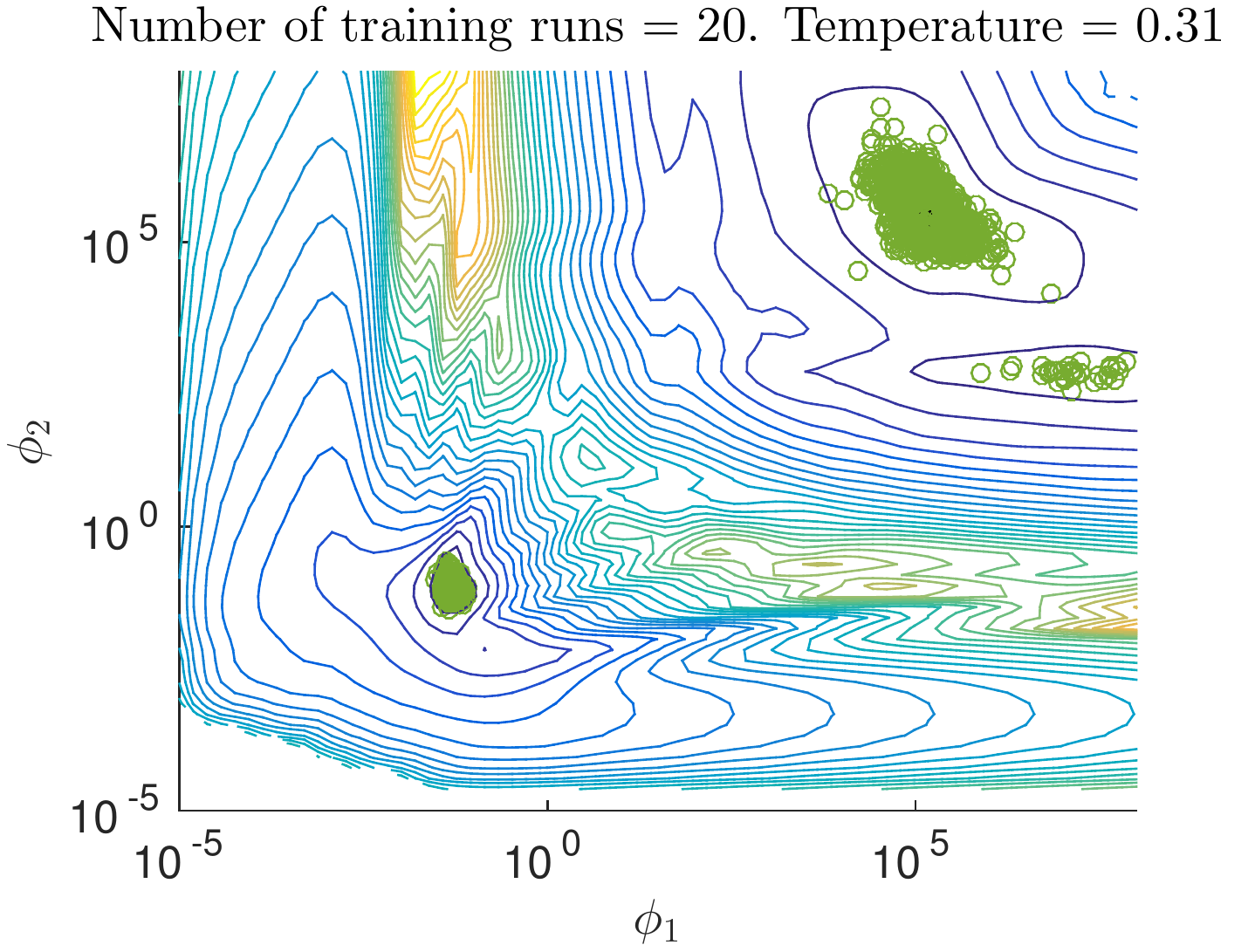}
\label{subfig:franke-sample}} 
\caption{Projection of the negative log-posterior curves in the two dimensional
length-scale space for Franke's simulator using a fully-parametrised Gaussian
process. The minimum possible value of $10^{-12}$ for the nugget
$\boldsymbol\phi_\delta$ has been used for the projection and the temperature
has been let to reach a practical zero, thus retrieving samples from the modes.}
\end{figure}

\added{To contrast the proposed sampler against the PLGP benchmark, a set of 100
experiments were run. In each experiment, a sample of size 100 of length-scale
hyper-parameters was obtained by each method. This sample size was achieved by
thinning the TA$^2$S$^2$ results when a chain of length 2000 was constructed in
every annealing level. The quality of the probability statements made from both
results were compared by means of the CRPS, as depicted in Figure
\ref{fig:crps_franke}. Note that the training set was the same for each
experiment and variations in the results among experiments are mainly because of
the stochastic nature of the sampling schemes. In Figure \ref{subfig:franke-box1} 
the boxplots for the CRPS computed from the samples are shown. Figure
\ref{subfig:franke-box2} shows the same plots but with an exponential prior for
all hyper-parameters with rate 0.2, \ie $\lambda = 5$. This choice of a prior
distribution was made since the PLGP software assumes an exponential prior for
both the length-scales and nugget term \citep{Gramacy2009}. In both settings,
the proposed sampler outperforms both the PLGP alternative and the MAP estimate.
The latter was calculated from the samples generated by TA$^2$S$^2$. Our
experiments demonstrated that the MAP estimated this way usually corresponds to
the one found by local optimisation routines such as Nelder-Meade or BFGS. The
variation in the scores of the MAP illustrates the multi-modality properties of
the integrated posterior. All the MAP estimates reported in the remainder are
calculated based on this observation. Additionally, it can be seen that the PLGP
results seem to contain those achieved by the most probable candidate. In this
experiment either using a sample from PLGP or MAP translate in comparable
results.}

\begin{figure}[H]
\centering
\subfloat[Reference prior.]{\includegraphics[draft=false,width=.32\linewidth]{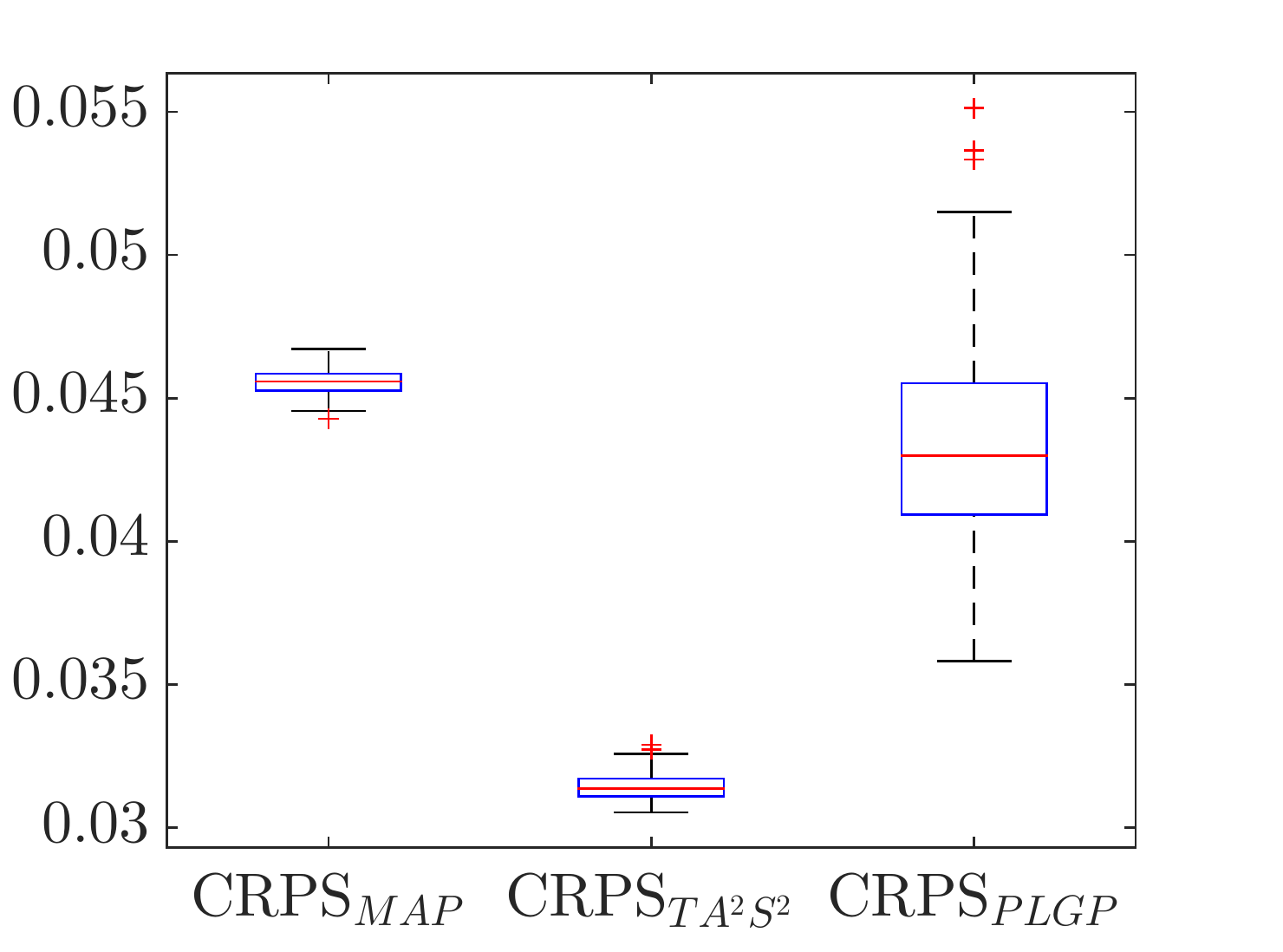}
\label{subfig:franke-box1}}
\subfloat[Exponential prior, $\lambda = 5.$]{\includegraphics[draft=false,width=.32\linewidth]{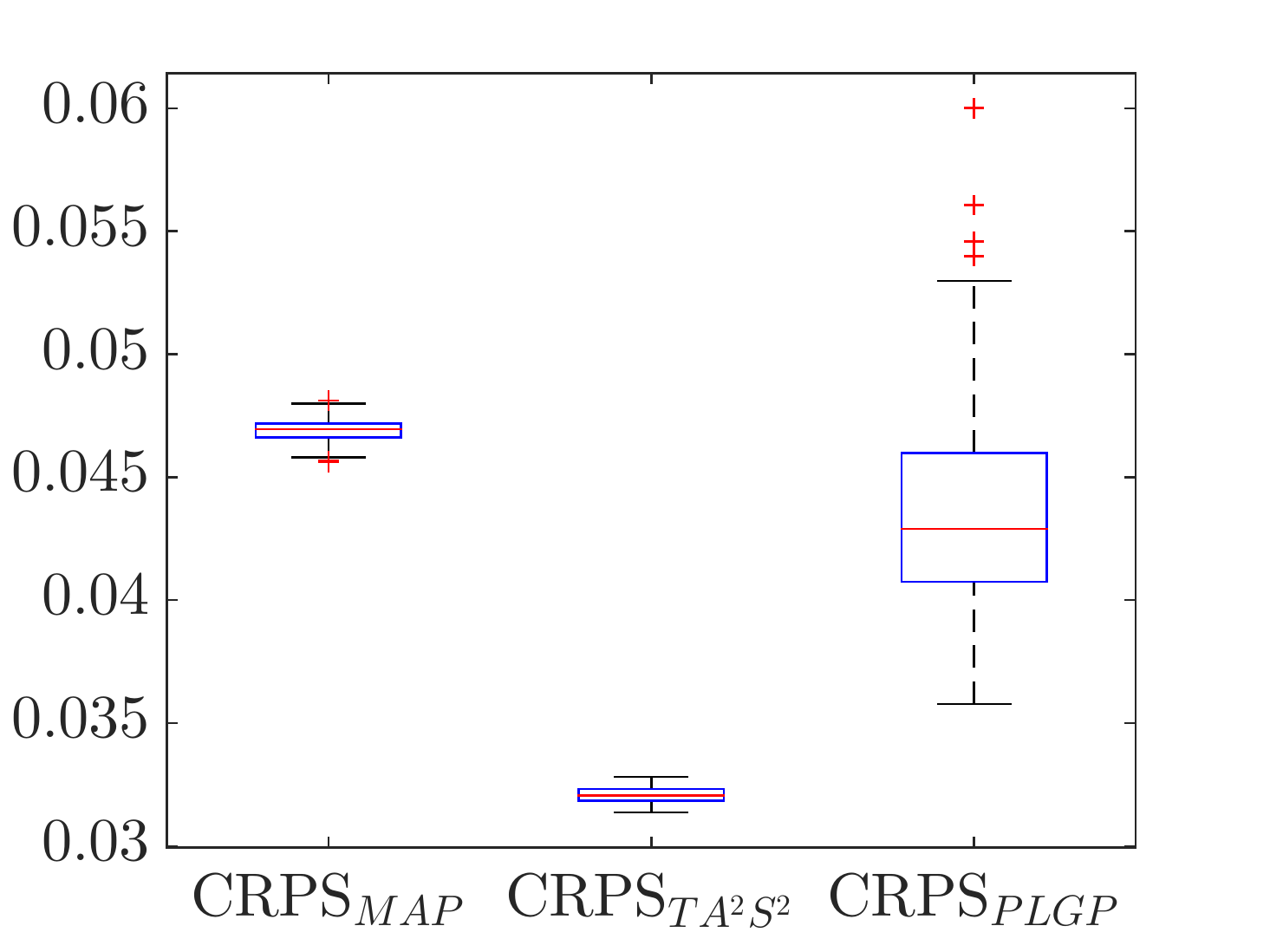}
\label{subfig:franke-box2}} 
\caption{Boxplots of CRPS comparing MAP, TA$^2$S$^2$ and PLGP. In both cases the proposed sampler outperforms PLGP.}
\label{fig:crps_franke}
\end{figure}

\subsection{Nilson-Kuusk model} 

This simulator models the reflectance of a homogeneous plant canopy. Its 
five-dimensional input space includes the solar zenith angle, the leaf area
index, the relative leaf size, the Markov clumping parameter and a model
parameter $\lambda$ \citep[see][for further details on the model itself and the
meaning of the inputs and output]{Nilson1989}. For the analysis presented in
this paper, a single output emulator is assumed and the set of the inputs have
been rescaled to fit the hyper-rectangle $[0,1]^5$ as in \cite{Bastos2009}.

In this experiment, both samplers were used to train a Gaussian process emulator
with a dataset of 100 simulation runs. The test set consisted of a different set
of 150 training runs. Both datasets, whose design points were generated through
LHS, were obtained from the GEM-SA software web page
(\url{http://ctcd.group.shef.ac.uk/gem.html}). On average, a total of 10
tempered distributions were used in the annealing schedule, while keeping the
sampling as $N=5000$ in each level. A thinned sample of 100 experiments was
recovered by the end of each TA$^2$S$^2$ run to compare results.\deleted{The
RMSE for both the MAP and mixture is 0.0214. This seems to indicate that given
the training set, the posterior distribution of the length- scales is highly
concentrated around one mode. The boxplots in figure \ref{subfig:kuusk-results}
shows the distribution of the length-scales in the five-dimensional log-space.
There is strong evidence that the fifth dimension is more important and there is
clearly and ordered triplet $(\phi_5, \phi_1, \phi_2)$ for the sensitivity of
the output to such variables. However, there is not enough information to make a
more informed judgement regarding $\phi_3$ and $\phi_4$. The standardised
residuals confirm the assumption of little difference when considering the
sensitivity of the simulator to the dimensions associated with $\phi_3$ or
$\phi_4$.}

\added{The results shown in Figure \ref{fig:kuusk-res} demonstrate again the
overall improved performance of using the proposed sampler in contrast with the
benchmark. In this case, one set of experiments (50 iterations) consisted on
making inference with the reference prior discussed previously, while the second
set (50 iterations) used a common exponential prior. Both samplers outperform
the MAP estimate which provides evidence that a more complete uncertainty
analysis can be carried out if instead one turns to a full Bayesian inference
scheme. Additionally, it is worth noting that Figures \ref{subfig:kuusk-box1}
and  \ref{subfig:kuusk-box2} show that the MAP changes with the prior used. This
is not a matter of concern, as it is consistent with the notion that small
amount of data is being used for inference and the probabilistic model of the
observables is not dominating the prior beliefs. }

\begin{figure}[H]
\centering
\subfloat[Reference prior.]{\includegraphics[draft=false,width=.32\linewidth]{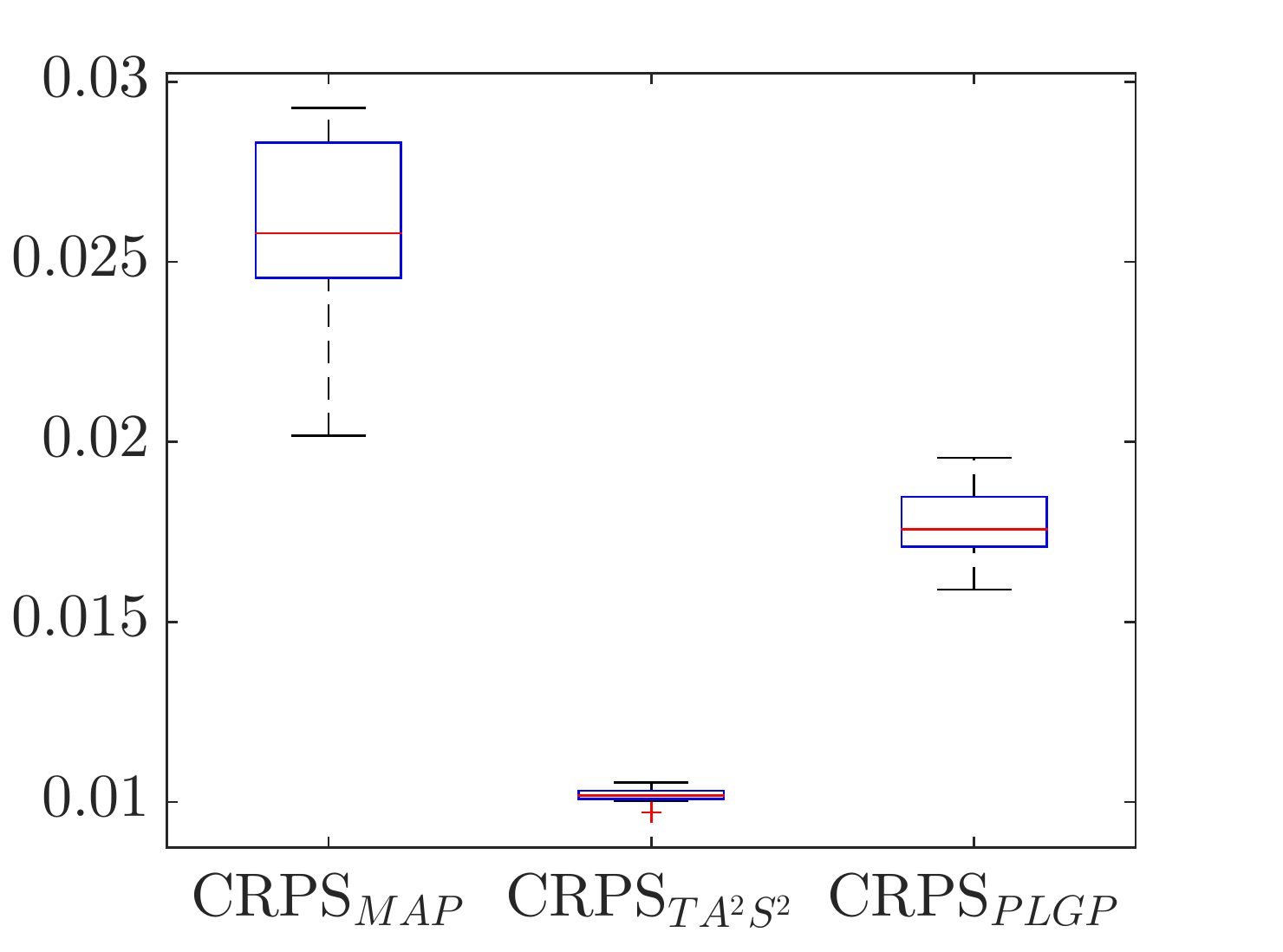}
\label{subfig:kuusk-box1}}
\subfloat[Exponential prior, $\lambda = 5$.]{\includegraphics[draft=false,width=.32\linewidth]{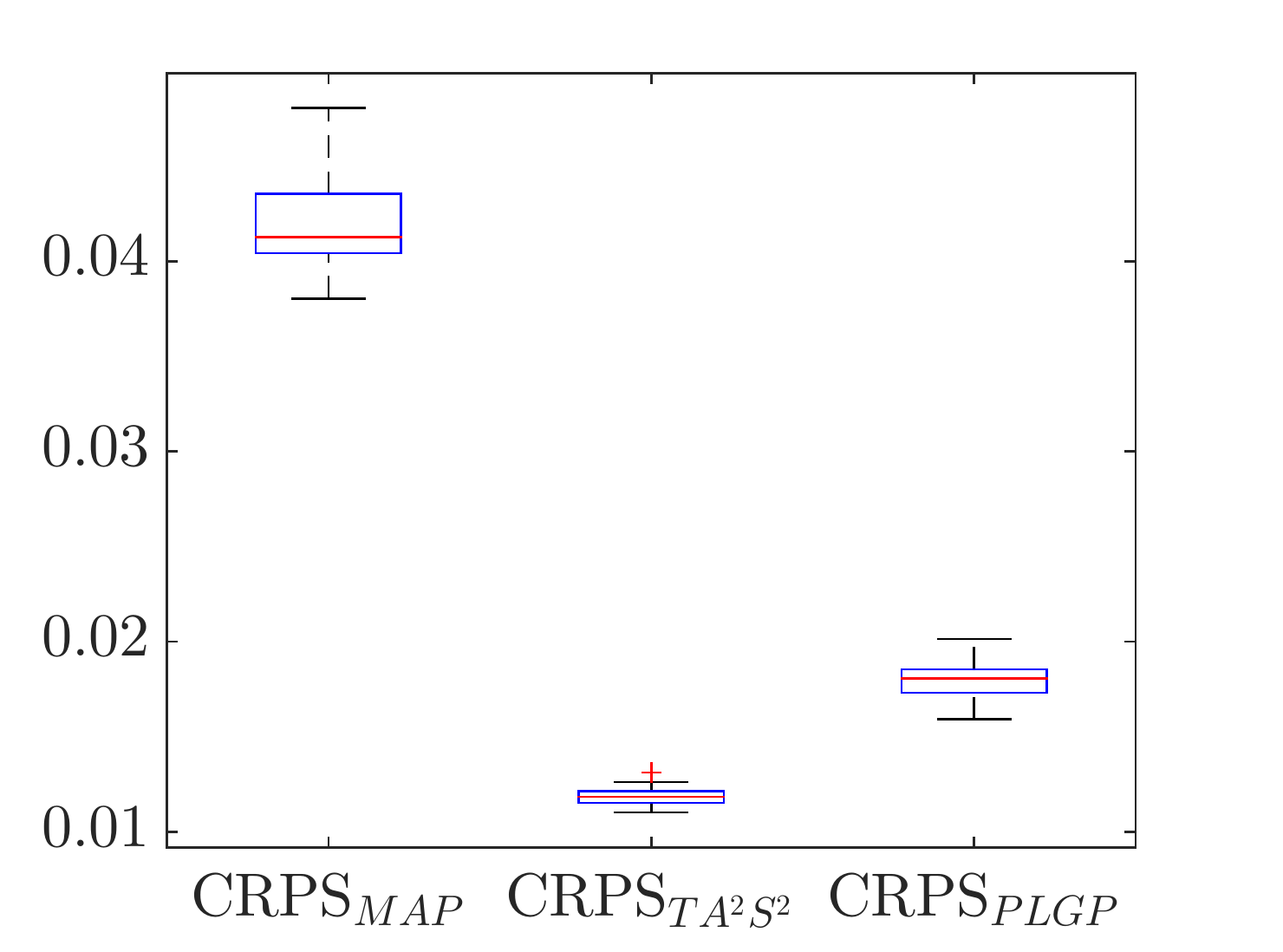}
\label{subfig:kuusk-box2}} 
\subfloat[Sample from TA$^2$S$^2$ for each length-scale.]{\includegraphics[draft=false,width=.32\linewidth]{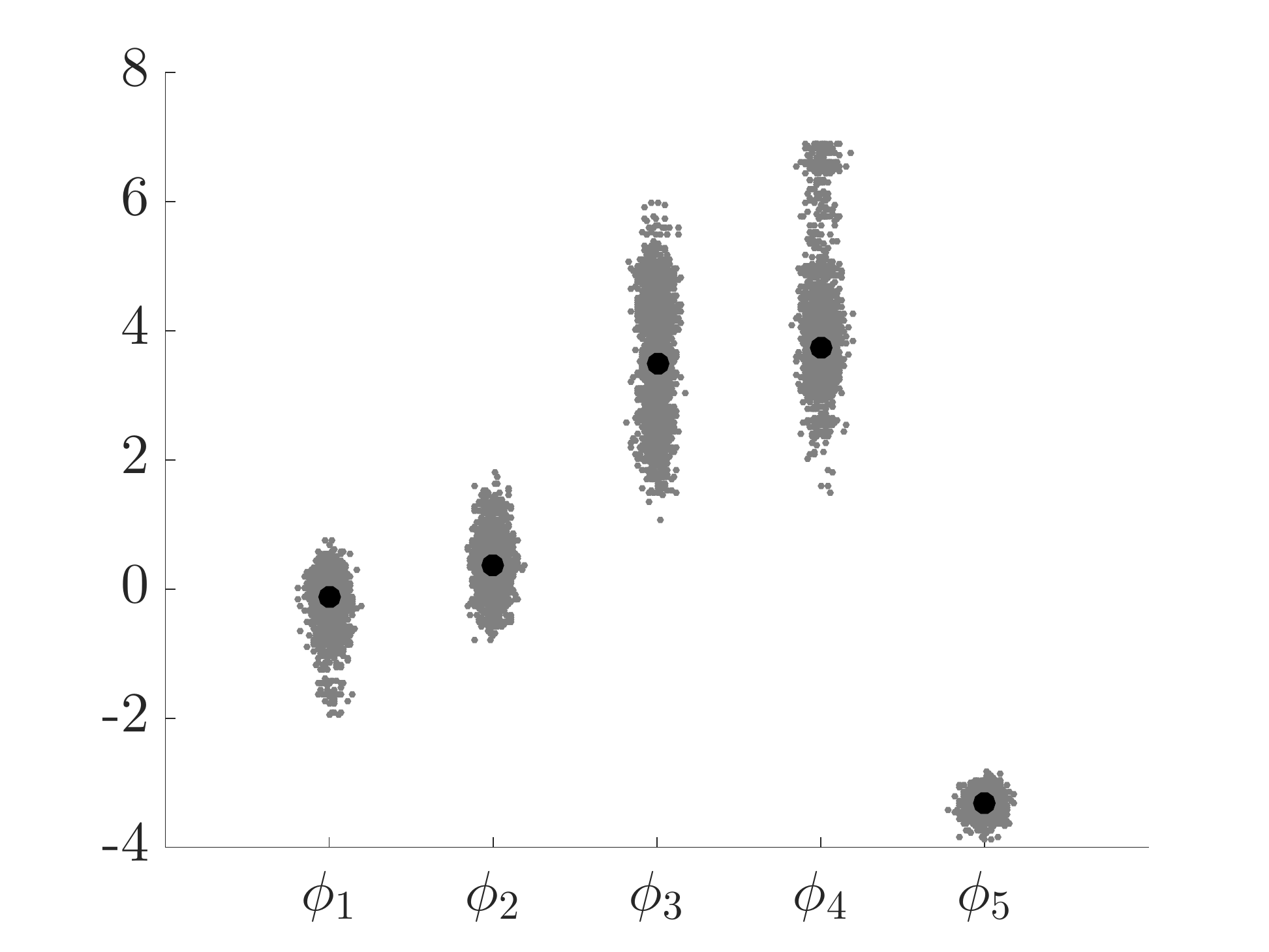}
\label{subfig:kuusk-scatter}}
\caption{Nilson-Kuusk results obtained after 50 experiments were run. The scatter plot \eqref{subfig:kuusk-scatter} is drawn from a single experiment, in black the MAP estimates.}
\label{fig:kuusk-res}
\end{figure}

\subsection{Wing weight model}

This simulator of the weight of the wing of a light aircraft
\citep{Forrester2008a} has been used for input screening. Coupled with a Gaussian
process emulator with the squared exponential kernel in equation 
\eqref{eq:covariance-func}, a sensitivity analysis of the wing weight with
respect to each input variable can be performed. The model is given by

\begin{align}
f(\vec{x}) = 0.036 \, S_w^{0.758} \, W_{fw}^{0.0035} \, \left( \frac{A}{\cos^2(\Lambda)} \right)^{0.6} \,
q^{0.006} \, \lambda^{0.04} \, \left( \frac{100\,t_c}{\cos(\Lambda)} \right)^{-0.3} (N_z\, W_{dg})^{0.49} \,
+ \, S_w \, W_p, 
\end{align}

\noindent where the input variables and the range of their values are summarised
in Table \ref{tab:wing}.\deleted{In order to focus on determining the most
relevant inputs on the correlation structure of the Gaussian process, it is
assumed that $h(\vec{x}) = 1$.} For this problem, the evaluation of the
reference prior is prohibitive since it scales with the number of dimensions
\citep{Paulo2005}. Thus, a uniform prior in the hyper-parameters'
log-space has instead been used for this experiment.

\begin{table}[H]
\centering
\begin{tabular}{|c|c|c|}
\hline
Input & Range & Description \\
\hline
$S_w $ & $ [150, 200]$ & Wing area \\
$W_{fw} $ & $ [220, 300]$ &	Weight of fuel in the wing \\
$A $ & $ [6, 10]$& 	Aspect ratio \\
$\Lambda$ & $ [-10, 10]$& 	Quarter-chord sweep \\
$q $ & $ [16, 45]$& 	Dynamic pressure at cruise \\
$\lambda $ & $ [0.5, 1] $&	Taper ratio \\
$t_c $ & $ [0.08, 0.18]$& 	Aerofoil thickness to chord ratio \\
$N_z $ & $ [2.5, 6]$& 	Ultimate load factor\\
$W_{dg} $ & $ [1700, 2500]$& 	Flight design gross weight \\
$W_p $ & $ [0.025, 0.08] $&	Paint weight \\ 
\hline
\end{tabular}
\caption{Inputs of the wing weight model.}
\label{tab:wing}
\end{table}

\noindent The inputs were rescaled to the 10-dimensional unit hypercube
$[0,1]^{10}$. Two LHS samples of size 100 and 300 were chosen as training and
testing sets respectively. At each annealing level, 5000 samples were generated,
achieving convergence after 15 levels on average. As before, a thinned sample of
100 was kept to compare results with the benchmark in each experiment. A total
of 50 experiments we run in this case.

In Figures \ref{subfig:wings-box1} and \ref{subfig:wings-box2}, it can be noted
that by using sampling one can obtain an improved version of the probabilistic
statements made by the surrogate. Figure \ref{subfig:wings-box1} shows that the
proposed sampler outperforms the benchmark, concentrating its samples
around the mode of the posterior distribution. This is shown by the location and
spread of the CRPS for both TA$^2$S$^2$ and MAP results, which are similar.
Although the CRPS with TA$^2$S$^2$ exhibits better performance as can been seen
from the location and spread of the boxplot. PLGP in the case of Uniform priors
seem to be sampling from other areas with less spread than that of the proposed
algorithm. However, if an exponential prior distribution is used for both
length-scales and nugget, Figure \ref{subfig:wings-box2}, TA$^2$S$^2$ is
outperformed by the benchmark. Note how the location of the mode changes
dramatically, showed by both the means of the boxplot of the CRPS of the MAP,
and by the difference of the samples plotted in Figures \ref{subfig:wings-box3}
and \ref{subfig:wings-box4}. Nonetheless, the proposed sampler is capable of
offering improved probabilistic statements for the regression task compared to
the MAP estimate.

\deleted{The RMSE for the MAP estimate was 3.1719, whilst 3.1918 for the
mixture. This could suggest that there is little value in computing the mixture.
However, the simple analysis presented in figure \ref{subfig:box-wings} not only
shows that there is a subset of influential variables on the output of the
model. It also reveals that the less influential of the inputs (2, 4, 5) are not
concentrated around a single mode and thus the ranking of their influence on the
output is not unique. This information would not be available had the mixture of
Gaussian processes not been employed. The standardised residuals shown in Figure
\ref{subfig:res-wings} are evidence that the posterior distribution is highly
concentrated in the modes of the relevant inputs, and its application for error
prediction is not affected by the variation of the remaining inputs, though the
mixture model consistently narrows the spread of the residuals. By fully
characterising the optimal approximations with the TA$^2$S$^2$ algorithm, the
structural uncertainty of the model is taken fully into account.}

\begin{figure}[H]
\centering
\subfloat[Uniform prior.]{\includegraphics[draft=false,width=.32\linewidth]{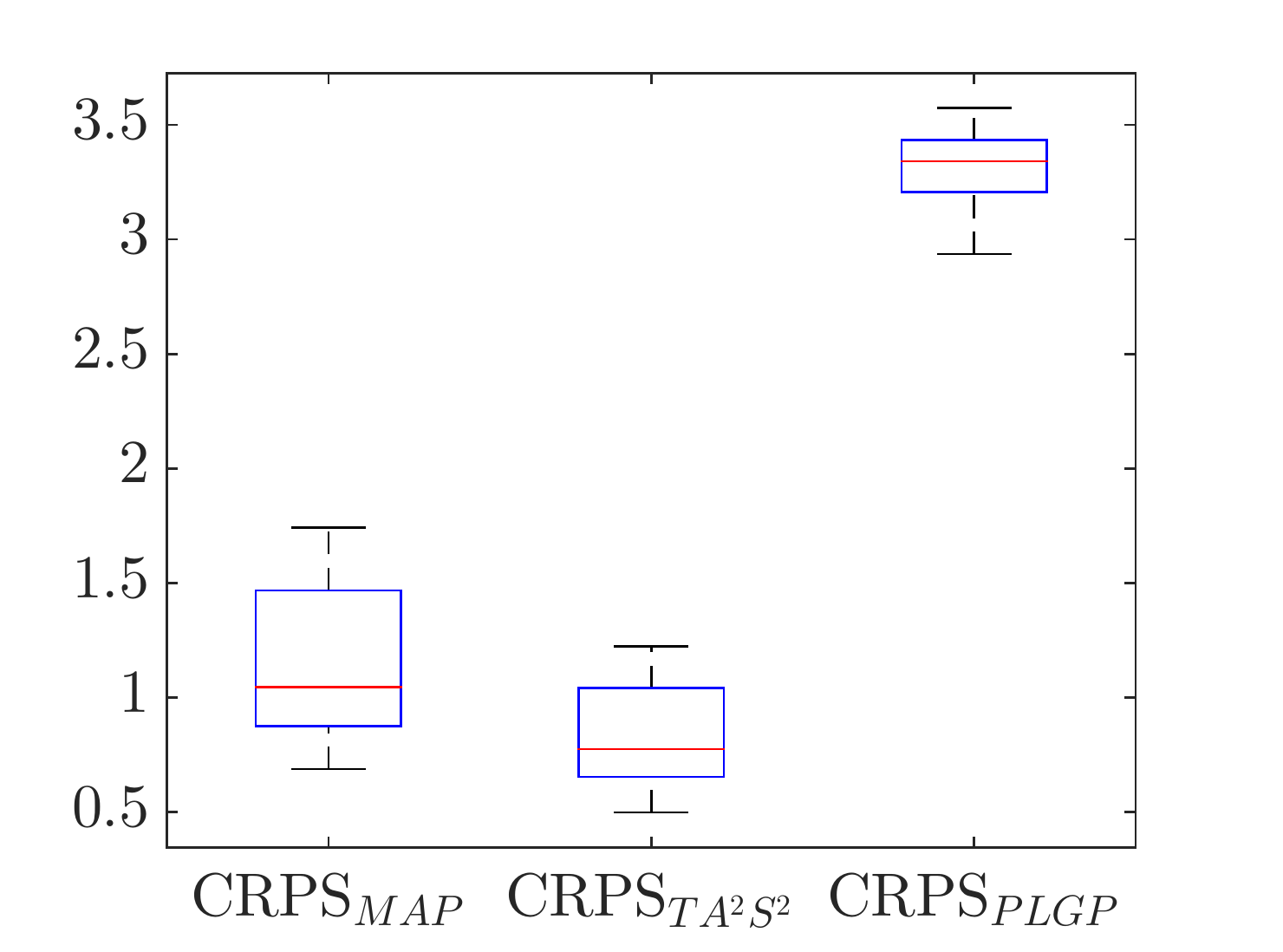}
\label{subfig:wings-box1}} \qquad
\subfloat[Exponential prior, $\lambda = 5$.]{\includegraphics[draft=false,width=.32\linewidth]{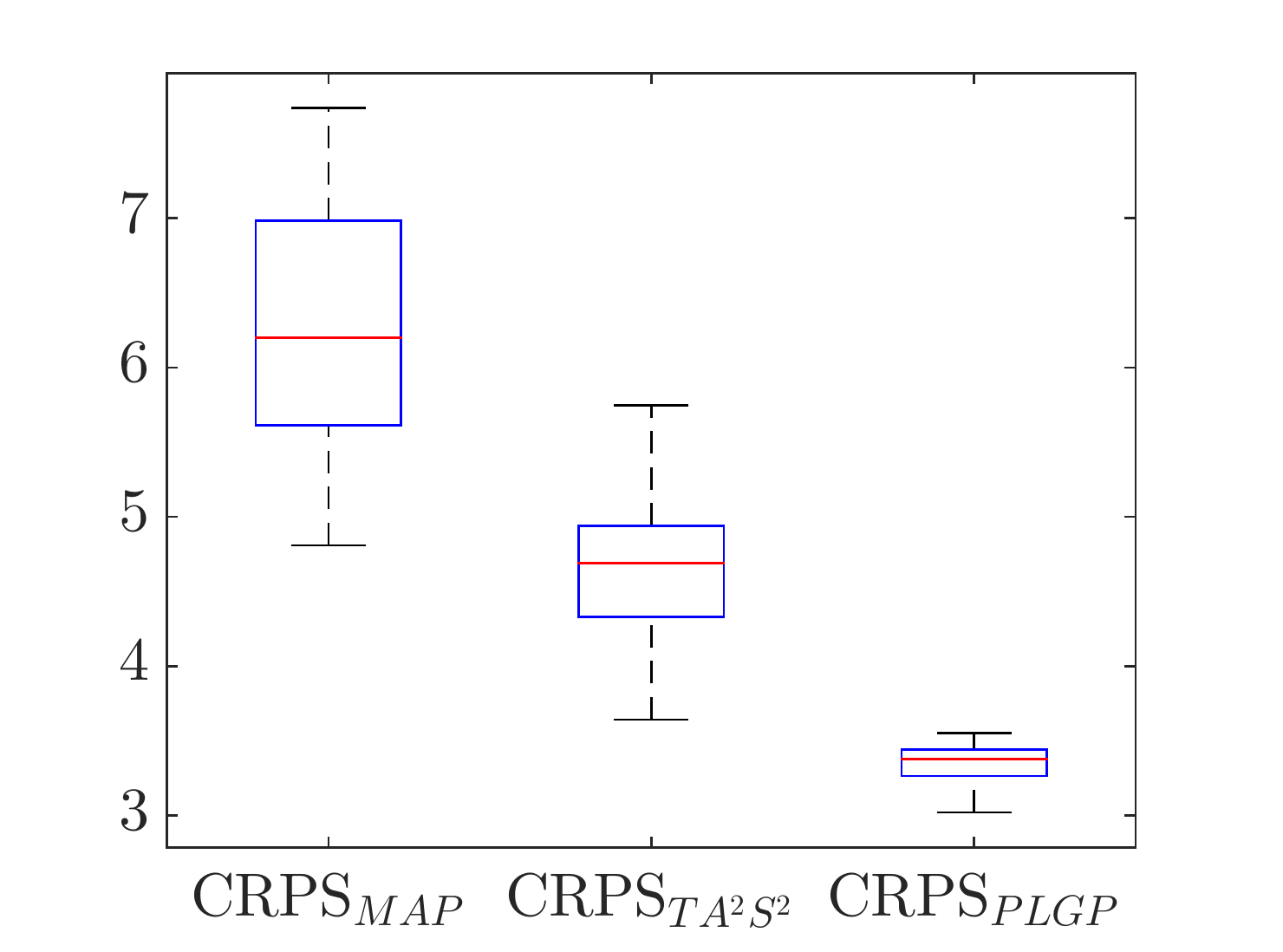}
\label{subfig:wings-box2}} \\
\subfloat[Sample from TA$^2$S$^2$ for each dimension. Uniform prior.]{\includegraphics[draft=false,width=.32\linewidth]{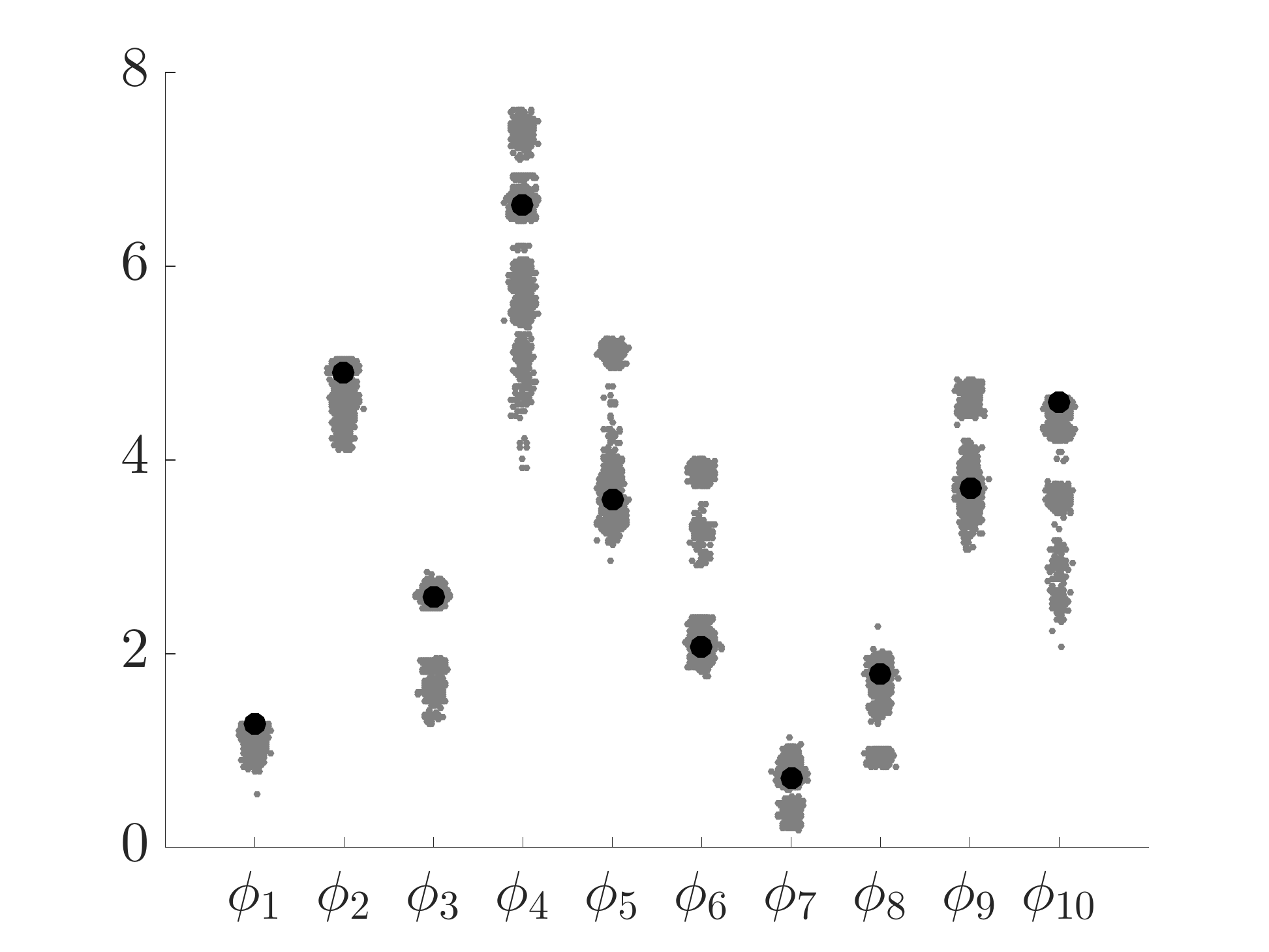}
\label{subfig:wings-box3}} \qquad
\subfloat[Sample from TA$^2$S$^2$ for each dimension. Exponential prior.]{\includegraphics[draft=false,width=.32\linewidth]{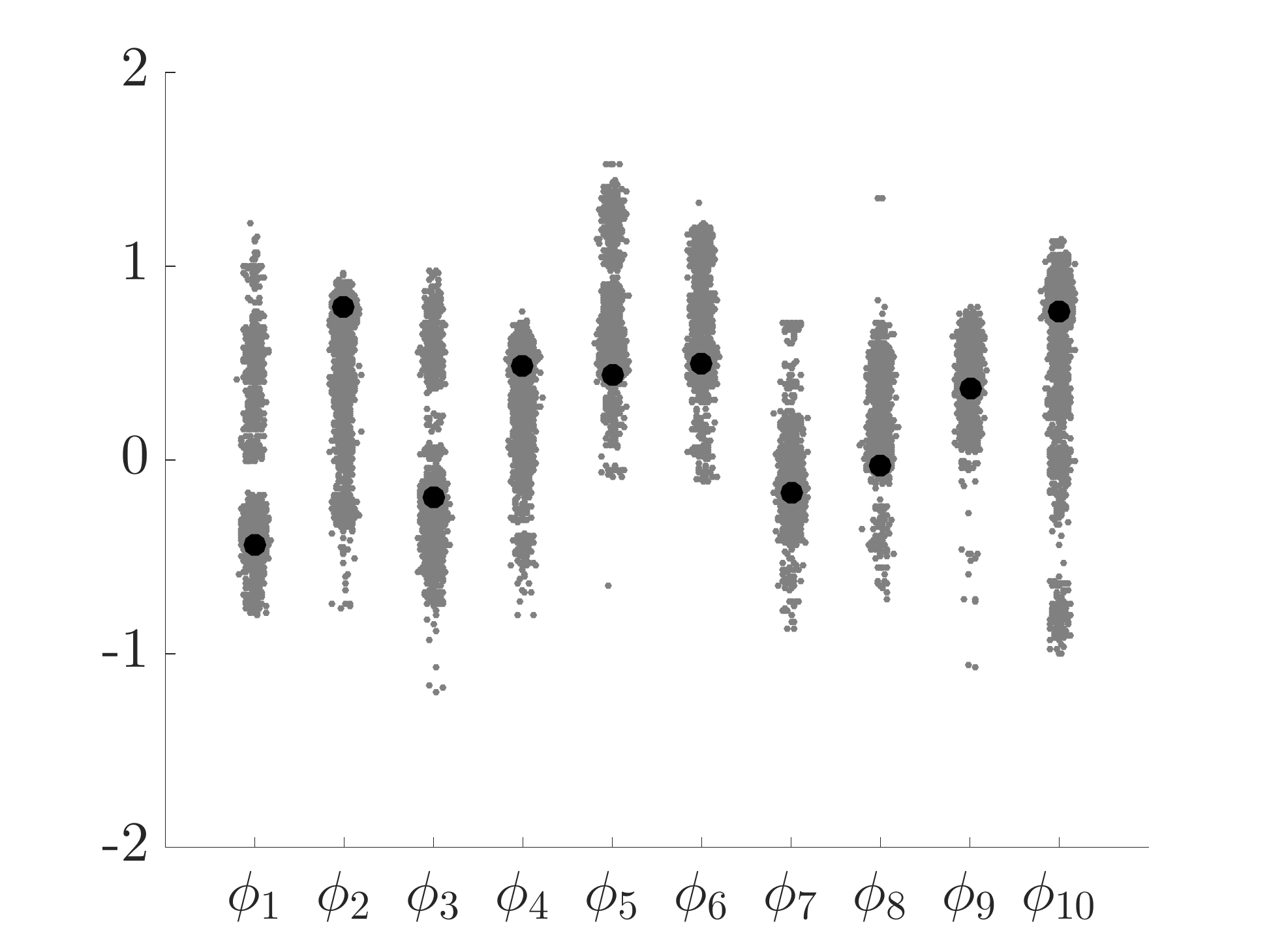}
\label{subfig:wings-box4}}
\caption{Results for the wing-weight model using the TA$^2$S$^2$ and PLGP algorithms. The scatter plot \eqref{subfig:kuusk-scatter} is drawn from a single experiment, in black the MAP estimates.}
\label{fig:wings-res}
\end{figure}

\section{Conclusions} \label{sec:conc}

This paper proposes a method, Transitional Annealed Adaptive Slice Sampling
(TA$^2$S$^2$), to sample from the posterior distribution of the Gaussian
process' hyper-parameters. This is known to be a problem where multi-modal
distributions are encountered. TA$^2$S$^2$ combines Slice Sampling for delayed-
rejection, Transitional Markov Chain Monte Carlo (TMCMC) for efficient
parallelisation and Markov chain growth, and Asymptotically Independent Markov
Sampling (AIMS) for driving the annealing schedule. The delayed-rejection
feature of Slice Sampling provides improved mixing which is desirable in highly
correlated spaces. Additionally, the proposed algorithm provides a sampling
scheme with no burn-in periods for Markov the chain, since each sample generated
follows the desired distribution. \added{This is advantageous in Gaussian
process applications where the computational burden of the sampling scheme is
dominated by the inversion of the correlation matrix. Moreover, }efficient
coverage of the sampling space is achieved by annealing and an extension of
Slice Sampling to Sequential Monte Carlo.

The examples presented show how the method is capable of efficiently exploring
multi-modal distributions providing a better alternative to MAP estimates or
traditional MCMC methods for Gaussian processes applications.
\added{Furthermore, TA$^2$S$^2$ allows the generation of samples more
efficiently than in traditional MCMC applications, where longer chains are
needed to ensure good coverage of the sampling space. This is particularly
relevant in the context of computer experiments and Engineering, where it is
usually considered that the cost of marginalising the hyper-parameters does not
justify the additional computational burden. The proposed method justifies the
cost of sampling the hyper-parameters by providing robust estimates of the
predicted error. This is reflected through the continuously ranked probability
score (CRPS) which shows that the marginalised emulator outperforms MAP
estimates. Moreover, the proposed sampling scheme performs as good as the PLGP
benchmark, which justifies the annealing by means of functional tempering rather
than by subsets of data, which is also sensitive to data permutations.}

\replaced{The proposed algorithm could also be employed for global optimisation
problems, by allowing the temperature reach a practical zero. This has
application in other areas of Bayesian inference problems and machine learning.
As discussed previously, the algorithm can be used in active learning schedules
by means of Bayes' Theorem.}{ Additionally, it has also been shown that
TA$^2$S$^2$ can be used in other Bayesian inference contexts (such as
variational inference) and for other surrogates (such as a Bayesian neural
networks)}

The computational cost of the proposed sampler is dominated by the inversion of
the covariance matrix in the integrated posterior distribution. This limitation
is inherent to any MCMC schedule used for Gaussian processes. Further
strategies to improve the speed of the sampler could be developed, such as the exact evaluation of the log-posterior for candidates where the probability of
acceptance is high. For such strategy, a first order Taylor expansion of the
objective function could be a feasible enhancement. This is material for future
research.


\section*{Acknowledgements}

The first author gratefully acknowledges the Consejo Nacional de Ciencia y
Tecnolog\'{\i}a (CONACyT) for the award of a scholarship from the Mexican
government for graduate studies. \added{The authors would like to thank
additionally an anonymous reviewer for his/her comments to improve the quality
of this paper.}


\end{document}